\relax
\documentclass[letterpaper]{article}
\usepackage{aaai17}
\usepackage{times}
\usepackage{helvet}
\usepackage{courier}

\usepackage{url}
\usepackage{amsmath}
\usepackage{amsthm}
\usepackage{amsfonts}
\usepackage{amssymb}
\usepackage{mathtools}
\usepackage{algorithm}
\usepackage{algpseudocode}
\usepackage{algorithmicx}
\usepackage{multirow}
\usepackage[inline]{enumitem}
\usepackage{tabularx}
\usepackage{subcaption}
\usepackage{etoolbox}

\usepackage{tikz}
\usetikzlibrary{graphs}
\usetikzlibrary{positioning,calc}

\newtheorem{myclaim}{Claim}
\newtheorem{mylemma}{Lemma}

\newtheorem{mythm}{Theorem}
\newtheorem{myprop}{Proposition}
\newtheorem{myeg}{Example}
\AtEndEnvironment{myeg}{\null\hfill\qedsymbol}
\let\oldeg\myeg
\renewcommand{\myeg}{\oldeg\normalfont}

\let\oldobs\myobs
\renewcommand{\myobs}{\oldobs\normalfont}
\newtheorem{mydef}{Definition}
\let\olddef\mydef
\renewcommand{\mydef}{\olddef
}

\makeatletter
\newcommand{\algmargin}{\the\ALG@thistlm}
\makeatother
\newlength{\whilewidth}
\settowidth{\whilewidth}{\algorithmicwhile\ }
\algdef{SE}[parWHILE]{parWhile}{EndparWhile}[1]
  {\parbox[t]{\dimexpr\linewidth-\algmargin}{%
     \hangindent\whilewidth\strut\algorithmicwhile\ #1\ \algorithmicdo\strut}}{\algorithmicend\ \algorithmicwhile}%
\algnewcommand{\parState}[1]{\State%
  \parbox[t]{\dimexpr\linewidth-\algmargin}{\strut #1\strut}}

\newcommand{\citeay}[1]{\citeauthor{#1}~\shortcite{#1}}

\newcommand\Omit[1]{}

\newcommand{\mttc}{\text{MTTC}}
\newcommand{\mttcs}{\text{MTTC*}}
\newcommand{\ma}{{\mathcal A}}
\newcommand{\mc}{{\mathcal C}}

\newcommand{\mt}{{\mathcal T}}
\newcommand{\mn}{{\mathcal N}}
\newcommand{\mo}{{\mathcal O}}
\newcommand{\order}{\text{Order}}
\newcommand{\po}{\text{PO}}
\newcommand{\type}{\text{Type}}
\newcommand{\cyc}{\text{Cycles}}
\newcommand{\ext}{\text{Ext}}
\newcommand{\pa}{\text{Pa}}
\newcommand{\cpt}{\text{CPT}}

\frenchspacing
\setlength{\pdfpagewidth}{8.5in}
\setlength{\pdfpageheight}{11in}
\pdfinfo{
/Title (Insert Your Title Here)
/Author (Put All Your Authors Here, Separated by Commas)}
\setcounter{secnumdepth}{0}  
 \begin{document}
%
\title{Mechanism Design for Multi-Type Housing Markets}
\author{
Sibel Adal{\i} \\ Department of Computer Science \\ Rensselaer Polytechnic Institute \\ sibel@cs.rpi.edu \\
\And Sujoy Sikdar \\ Department of Computer Science \\ Rensselaer Polytechnic Institute \\ sikdas@rpi.edu \\
\And Lirong Xia \\ Department of Computer Science \\ Rensselaer Polytechnic Institute \\ xial@cs.rpi.edu\\
}
\maketitle
\begin{abstract}

We study {\em multi-type housing markets}, where there are $p\ge 2$ types of items, each agent is initially endowed one item of each type, and the goal is to design mechanisms without monetary transfer to (re)allocate items to the agents based on their preferences over {\em bundles} of items, such that each agent gets one item of each type. In sharp contrast to classical housing markets, previous studies in multi-type housing markets have been hindered by the lack of natural solution concepts, because the strict core might be empty.

We break the barrier in the literature by leveraging AI techniques and making natural assumptions on agents' preferences. We show that when agents' preferences are lexicographic, even with different importance orders, the classical top-trading-cycles mechanism can be extended while preserving most of its nice properties. We also investigate computational complexity of checking whether an allocation is in the strict core and checking whether the strict core is empty. Our results convey an encouragingly positive message: {\em it is possible to design good mechanisms for multi-type housing markets under natural assumptions on preferences.}

\end{abstract}

\section{Introduction}

In this paper, we ask the following question: {\em is it possible at
  all to design good mechanisms for multi-type housing markets?}  In
multi-type housing
markets~\cite{Moulin95:cooperative,Konishi01:On-the-ShapleyScarf,Wako2005:Coalition-proof,Klaus2008:The-coordinate-wise},
there are multiple types of items, each agent is initially endowed one
item of each type.  The goal is to design mechanisms without monetary
transfer to (re)allocate items to the agents based on their
preferences over {\em bundles} of items, such that each agent gets one
item of each type. 

Multi-type housing markets are often described using
examples of houses and cars as metaphors for indivisible
items. However, the allocation problem is applicable to many other
types of items and scarce resources.  For example, students may want
to exchange papers and dates for
presentation~\cite{Mackin2016:Allocating}; in cloud computing, agents
may want to allocate multiple types of resources, including CPU,
memory, and storage~\cite{Ghodsi11:Dominant,Ghodsi12:Multi}; patients
may want to allocate multiple types of medical resources, including
surgeons, nurses, rooms, and equipments~\cite{Huh13:Multiresource}.

Mechanism design for single-type housing markets is a well-established
field in economics, often referred to as {\em housing
  markets}~\cite{Shapley74:Cores}. In housing markets, the most
sensible solution concept is the {\em strict core}, which is the set
of allocations where no group of agents have incentive to deviate by
exchanging their initial endowments within the group. Strict core is desirable because it is an intuitive stable solution, and when agents' preferences are linear orders, the strict core
allocation always exists and is unique, which can be
computed in polynomial time by Gale's celebrated {\em
  Top-Trading-Cycles (TTC)}
algorithm~\cite{Shapley74:Cores,Roth77:Weak,Abdulkadiroglu99:House}. TTC
enjoys many desirable axiomatic properties including {\em individual
  rationality}, {\em Pareto optimality}, and {\em
 group strategy-proofness}. Many extensions of TTC to other single-type
housing markets have been proposed and studied. See more details in
Related Work.

In sharp contrast to the popularity of housing markets, there is
little research on multi-type housing markets, despite their
importance and generality. A potential reason for the absence of
positive results is that the strict core can be empty or multi-valued
in multi-type housing
markets~\cite{Konishi01:On-the-ShapleyScarf}. Therefore,
as~\citeay{Sonmez11:Matching} noted: {\em ``Positive results of this
  section [on housing markets] no longer hold in an economy in which
  one agent can consume multiple houses or multiple types of
  houses''}. This is the problem we address in this paper and provide
a number of positive results, a first in this field.

\subsection{Our contributions}
In this paper, we present novel algorithms building on AI techniques
in preference representation and reasoning for allocation in
multi-type housing markets. We assume that agents'
preferences are represented by arbitrary acyclic {\em
  CP-nets}~\cite{Boutilier04:CP}. Different agents may have
arbitrarily different CP-net structure. We also assume that agents' preferences are {\em lexicographic},
meaning that agents have arbitrary importance orders over item types.


We propose
the following natural extension of TTC, which we call {\em Multi-type
  TTC (MTTC)} for multi-type housing markets. MTTC builds a
directed bipartite graph in each round, where agents and items are two
separated groups of vertices. There is an edge $(a,b_i)$, where $a$ is
an agent and $b_i$ is agent $b$'s initial type-$i$ endowment, if (1)
the most important type for agent $a$, from which she has not obtained an
item, is type $i$, and (2) $b_i$ is agent $a$'s top-ranked type-$i$
item among remaining type-$i$ items. For any type $i$ and any agent $b$, there
is an edge $(b_i,b)$. Then, MTTC finds and implements all cycles as
TTC does, but only removes the items in the cycle. The agents always remain
in the graph.

We note that in MTTC, a cycle may involve items of multiple types. Our
main theorem is the following:

\vspace{2mm}
\noindent {\bf Theorem~\ref{thm:mttc}.} {\em For lexicographic preferences, MTTC runs in polynomial-time and satisfies {\em strict-core-selection} (which implies {\em Pareto optimality} and {\em individual rationality}), {\em non-bossiness}, and {\em strong group strategy-proofness} when agents cannot lie about importance orders over types.}
\vspace{2mm}

We note that Theorem~\ref{thm:mttc}'s assumption on
lexicographic preferences are not only very mild given agents can have
arbitrarily different CP-net structure and importance orders, but also
an extension of naturalistic decision making structures studied in
cognitive science literature for ordering items based on multiple
criteria~\cite{Luan2014:From}. This positive result is especially
surprising when it is put in comparison with similar research in
{\em combinatorial voting}, where negative results emerge as soon as agents
do not share similar CP-net structures~\cite{Lang16:Voting}. 

We also prove an impossibility theorem to show that
Theorem~\ref{thm:mttc} cannot be improved by allowing agents
to lie about the importance order. As for computing other strict
cores, we prove that, it is coNP-complete to check whether a given
allocation is in the strict core, even for two types of items and agents
with separable preferences w.r.t.~the same importance order. This
hardness result is unexpected due to relatively small number of
bundles ($n^2$), and the simplicity of the underlying preference
structures.  Furthermore, the same problem for single-type housing
market can be easily computed in polynomial time. We also prove that
it is NP-hard to check whether the strict core is non-empty.

We expect that the availability of $\mttc$ will allow development of many applications
for multi-type housing markets, especially those discussed in the beginning of the Introduction. 

\subsection{Related Work and Discussions}

Housing markets are  closely related to {\em house allocation}, where agents do not have initial endowments, and {\em matching}, where houses also have preferences over agents. See~\cite{Sonmez11:Matching} for a recent survey. In the past decade, housing markets has been a popular topic under {\em multi-agent resource allocation}~\cite{Chevaleyre06:Issues}.

Many subsequent works in Economics and AI extend the standard, single-type housing markets. For example, agents may be indifferent between houses~\cite{Quint04:Houseswapping,Yilmaz2009:Random,Alcalde-Unzu2011:Exchange,Jaramillo12:Difference,Aziz12:Housing,Plaxton2013:A-Simple,Saban2013:House}. Agents may desire multiple houses~\cite{Papai2007:Exchange,Todo2014:Strategyproof,Sonoda2014:Two-Case,Fujita2015:A-Complexity,Sun2015:Exchange}. Some agents may not have initial endowments~\cite{Abdulkadiroglu99:House,Chen2002:Improving,Sonmez10:House}.

In our setting, agent's preferences are represented by the celebrated CP-nets~\cite{Boutilier04:CP}. A CP-net allows agents to specify preferences over items within each type given other items allocated to her. In general, a CP-net represents a {\em partial order} over bundles of items. CP-nets have been heavily used in combinatorial voting, see for example~\cite{Rossi04:mCP,Lang07:Vote,Lang16:Voting}, but negative results often emerge as soon as agents' CP-nets do not line up. {\em Lexicographic orders}~\cite{Booth10:Learning} are special linear orders that extend some CP-nets, where agents can specify importance orders over types. 

Previously, \citeay{Bouveret09:Conditional} proposed a CP-net-like language and~\citeay{Fujita2015:A-Complexity} used lexicographic orders to model agents' preferences over bundles of items of a single type. \citeay{Monte2015:Centralized} characterized strategy-proof mechanisms for multi-type house allocation, when agents' preferences include lexicographic preferences. To the best of our knowledge, our paper is the first time that CP-nets and lexicographic preferences are investigated in multi-type housing markets. 


\citeay{Konishi01:On-the-ShapleyScarf} assumed that agents have separable and additive preferences. This assumption is in general incomparable to our assumptions on agents' preferences, because preferences represented by CP-nets are generally not separable, except when the CP-net has no edges. Therefore, our positive results are {\em not} obtained by putting further restrictions on agents' preferences. 

There has been very little work on multi-type housing markets after~\citeauthor{Konishi01:On-the-ShapleyScarf}'s negative results~\cite{Wako2005:Coalition-proof,Klaus2008:The-coordinate-wise}. These papers focus on a different solution concept called {\em coordinate-wise core rule}, which are composed of type-wise strict-core allocations. We show that this naturally corresponds to the output of MTTC when agents' have separable and lexicographic preferences with a common importance order.

\section{Preliminaries}
We consider a market consisting of a set $N=\{1,...,n\}$ of {\em agents} with $p \ge 2$ {\em types} of indivisible items. For any $i\le p$, there are $n$ items of type $i$, denoted by $T_i=\{i_1,\ldots,i_n\}$. For each item $o$, $\type(o)$ is the type of $o$, that is, $o\in T_{\type(o)}$. Each agent $j \in N$ initially owns exactly one item of each type, and her endowment is denoted by a $p$-vector $O(j)$. W.l.o.g.~in this paper we let $O(j) = (j_1,\dots,j_p)$. Let $\mathcal{T} = T_1 \times \dots \times T_p$ be the set of all {\em bundles}, each of which is represented by a $p$-vector. We will often use vectors such as $\vec d$ and $\vec e$ to represent bundles, and for any $i\le p$, let $[\vec d]_i$ denote the type-$i$ item in $\vec d$. A {\em multi-type housing market} $M$ is given by the tuple $(N,\{T_1,\dots,T_p\},O)$. 

Each agent desires to consume exactly one item of each type, and her preferences are represented by a linear order over $\mathcal{T}$. A {\em preference profile} $P=(R_1,\dots,R_n)$ is a collection  of agents' preferences. In any multi-type housing market $M$, an allocation $A$ is a mapping from $N$ to $\mt$ such that for any $j\le n$, $A(j)$ is the bundle allocated to $j$. Since no item is allocated twice, we have that for any $j\ne j'$ and any $i\le p$, $[A(j)]_i\ne [A(j')]_i$. Given a market $M$, a {\em mechanism} $f$ is a function that maps agents' profile $P$ to an allocation in $M$.

\subsection{Axiomatic Properties}

A {mechanism} $f$ satisfies {\em individually rationality} if for any profile $P$, no agent prefers her initial endowment to her allocation by $f$. 
$f$ satisfies {\em Pareto optimality} if for any profile $P$, there does not exist an allocation $A$ such that (1) every agent weakly prefers her allocation in $A$ to her allocation in $f(P)$, and (2) some agent strictly prefers her allocation in $A$ to that in $f(P)$. 
$f$ is {\em non-bossy} if for any profile $P$, no agent can change any other agent's allocation without changing her own by reporting differently. 
$f$ is {\em strategy-proof} if for each agent, falsely reporting her preferences is not beneficial. 
A mechanism satisfies {\em strong group strategy-proofness} if there is no group of agents $S$ who can falsely report their preferences so that (1) every agent in $S$ gets a weakly preferred bundle, and (2) at least one agent in $S$ gets a strictly preferred bundle. 


An allocation $A$ is said to be {\em weakly blocked} by a {\em coalition} $S \subseteq N$, if the agents in $S$ can find an allocation $B$ of their initial endowments so that each agent weakly prefers allocation in $B$ to that in $A$, and some agent is strictly better off in $B$ than in $A$. 
 The {\em strict core} of a market is the set of all allocations that are not weakly blocked by any coalition. 
A mechanism $f$ is {\em strict-core-selecting}, if for any profile $P$, $f(P)$ is always in the strict core.

%
%
%

\subsection{CP-nets and Lexicographic Preferences}

A (directed) CP-net $\mn$ over $\mathcal{T}$ is defined by \begin{enumerate*}[label = (\roman*)] \item a directed graph $G=(\{T_1,...,T_p\},E)$, called the {\em dependency graph}, and \item for each $T_i$, there is a conditional preference table $\cpt_i$ that contains a linear order $\succ^i_{\vec u}$ over $T_i$ for each valuation $\vec{u}$ of the parents of $T_i$ in $G$, denoted $\pa(T_i)$.\end{enumerate*} Each CPT-entry $\succ^i_{\vec u}$ carries the following meaning: my preferences over type $i$ is $\succ^i_{\vec u}$ given that I get items $\vec u$, and these preferences are independent of other items I get. An agents' preferences are {\em separable} if there are no edges in the dependency graph.

Each CP-net $\mn$ represents a partial order $\succ_\mn$, which is the transitive closure of preference relations represented by all CPT entries, which are $\{(a_i,\vec{u},\vec{z}) \succ_\mn (b_i,\vec{u},\vec{z}): i \le p; a_i,b_i \in T_i; \vec{u} \in \mathcal{T}_{Pa(T_i)}; \vec{z} \in \mathcal{T}_{-(Pa(T_i) \cup \{T_i\})}\}$.

For example, Figure~\ref{fig:cp_eg} illustrates a separable CP-net. There are two types: houses (H) and cars (C), with two items per type. 
Since the preferences are separable, there is no edge in the dependency graph (in the left of the figure). The CPTs are shown in the middle of the Figure~\ref{fig:cp_eg}, and the partial order represented by the CP-net is shown in the right. 

\begin{figure}[htpb!]
\centering
\resizebox{\linewidth}{!}{%
\begin{tikzpicture}[remember picture]
\node (A)[]{${H}$};
\node (B)[below=1cm of A]{${C}$};
\node (At)[right=0.5cm of A]{
\begin{tabular}{|c|}\hline
Pref.~over $H$ \\ \hline
$1_H \succ 2_H$ \\ \hline
\end{tabular}
};

\node (Bt)[shape=rectangle,right=0.5cm of B] {
\begin{tabular}{|c|}\hline
Pref.~over $C$ \\ \hline
$1_C \succ 2_C$ \\ \hline
\end{tabular}
};

\node (C00)[right=1.2cm of At]{$(1_H,1_C)$};
\node (C01)[right=1cm of C00]{$(1_H,2_C)$};
\node (C10)[below=1cm of C00]{$(2_H,1_C)$};
\node (C11)[right=1cm of C10]{$(2_H,2_C)$};
\draw[->] (C00) -- (C01);
\draw[->] (C00) -- (C10);
\draw[->] (C10) -- (C11);
\draw[->] (C01) -- (C11);
\end{tikzpicture}
}
\caption{\label{fig:cp_eg} A CP-net. 
}
\end{figure}
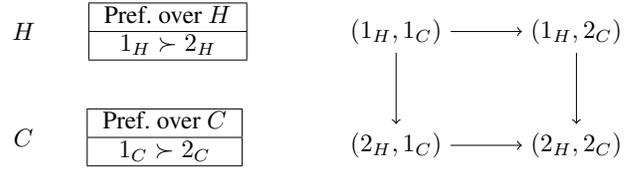

Let $\mathcal{O} = [T_1 \rhd \dots \rhd T_p]$ be a linear order over the types. A CP-net is $\mo$-legal, if there is no edge $(T_k,T_l)$ with $k>l$ in its dependency graph. A {\em lexicographic extension} of an $\mathcal{O}$-legal CP-net $\mn$ is a linear order $V$ over $\mathcal{T}$, where for any $i \le p$, any $\vec{x} \in T_1 \times \dots \times T_{i-1}$, any $a_i,b_i \in T_i$, and any $\vec{y},\vec{z} \in T_{i+1} \times \dots \times T_p$, if $a_i \succ_{\vec{x}}^i b_i$ in $\mn$, then $(\vec{x},a_i,\vec{y}) \succ_V(\vec{x},b_i,\vec{z})$. In other words, the agent believes that type $T_1$ is most important type to her, $T_2$ is the second most important type, etc. In a lexicographic extension, $\mo$ is called the {\em importance order}.

In this paper an agent's preferences are {\em lexicographic}, which means that each agent's ranking is a lexicographic extension of a CP-net. We note that the CP-net does not need to be separable and the importance order can be different. 


\begin{myeg} 
Suppose the agent's preferences is lexicographic w.r.t.~the separable CP-net in Figure~\ref{fig:cp_eg} and the importance order $H\rhd C$, then her preferences are $(1_H,1_C)\succ (1_H,2_C)\succ (2_H,1_C)\succ (2_H,2_C)$. If her importance order is $C\rhd H$, then her preferences are $(1_H,1_C)\succ (2_H,1_C)\succ (1_H,2_C)\succ (2_H,2_C)$.
\end{myeg}

\begin{figure*}[]
\centering
\begin{tabular}{c}
\begin{tabular}{ccc}
\begin{tabular}{|c|c|} \hline
Agent $1$ & $H \rhd C$ \\ \hline
$H$ prefs. & $2_H \succ $ others \\ \hline
$C$ prefs. & $3_C \succ 1_C \succ 2_C$ \\ \hline
\end{tabular} & 
\begin{tabular}{|c|c|} \hline
Agent $2$ & $H \rhd C$ \\ \hline
$H$ prefs. & $3_H \succ $ others \\ \hline
$C$ prefs. & $2_C\succ $ others \\ \hline
\end{tabular} &
\begin{tabular}{|c|c|} \hline
Agent $3$ & $H \rhd C$ \\ \hline
$H$ prefs. & $1_H \succ $ others \\ \hline
$C$ prefs. & $3_C \succ 1_C \succ 2_C$ \\ \hline
\end{tabular}
\end{tabular} \\ \\
\includegraphics[trim=0cm 9.9cm 0cm 0cm, clip=true, width=1\textwidth]{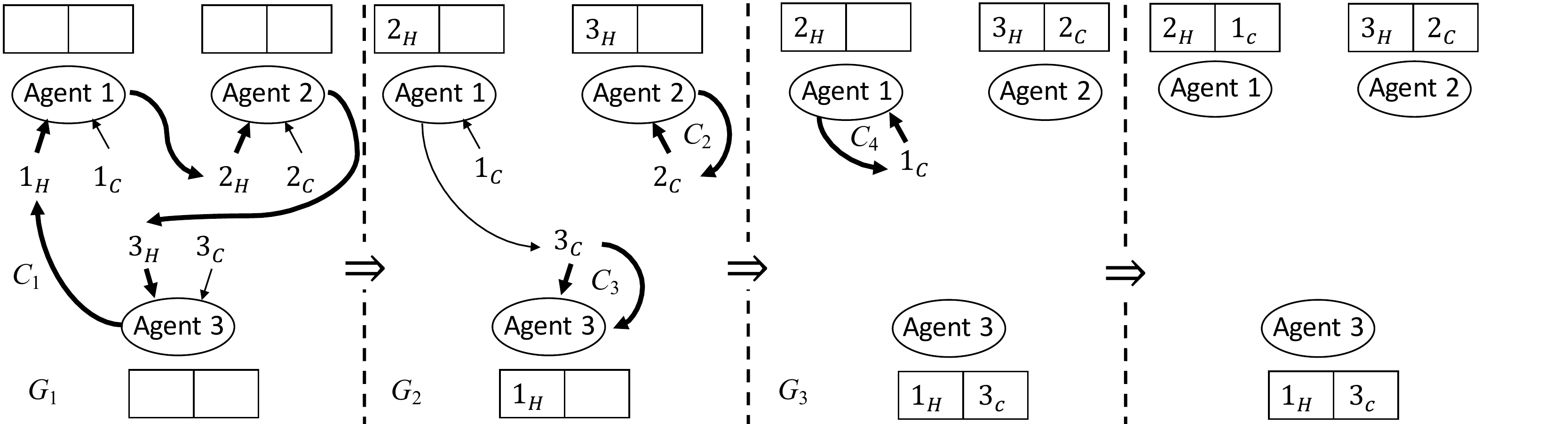} \\
\end{tabular}
\caption{\label{fig:mttc} A lexicographic and separable profile $P$ and an execution of $\mttc$ on $P$.}
\end{figure*}

\section{The Multi-Type TTC Mechanism}

We propose the {\em multi-type TTC (MTTC)} mechanism as Algorithm~\ref{alg:mttc}. $\mttc$ assumes that agents' preferences are lexicographic (w.r.t.~possibly different importance orders and possibly different CP-net structures).

\begin{algorithm}[H]
\begin{algorithmic}[1]
\State {\bf Input:} A multi-type housing market $M$ and a profile $P$ of lexicographic preferences.
\State $t \gets 1$. Let  $L \gets \cup_{i\le p} T_i$ be the set of unallocated items. Let $A$ be the empty assignment. For each $j\le n$, let $i_j^*$ is agent $j$'s most desirable type.
\While{$L \ne \emptyset$}
\parState{Build a directed graph $G_t=(N \cup L,E)$. For every $j \in N, j_i \in L$, add edge $(j_i,j)$ to $E$. For every $j \in N$, add edge $(j,\hat{j}_{i_j^*})$ to her most preferred item in $L$ of type $i_j^*$, to $E$.}
\parState{{\bf Implement cycles in $G_t$.} For each cycle $C$, for every $(j,\hat{j}_{i_j^*}) \in C$, assign $[A(j)]_{i_j^*} = [O(\hat{j})]_{i_j^*}$.}
\State Remove assigned items from $L$.
\parState{For any agent $j$ who is assigned an item, set $i_j^*$ to be the next type according to $j$'s importance order.}
\State $t \gets t+1$.
\EndWhile
\State {\bf Output:} The allocation $A$.
\end{algorithmic}
\caption{\label{alg:mttc} MTTC}
\end{algorithm}

\begin{myeg}\label{eg:mttc}
Consider the market with $3$ agents and $2$ types: Houses ($H$) and Cars ($C$) with items $\{1_H,2_H,3_H\}$ and $\{1_C,2_C,3_C\}$, respectively. Let the initial endowments of each agent $j$ be $(j_H,j_C)$. Figure~\ref{fig:mttc} shows agents' lexicographic preferences and the execution of $\mttc$. In particular, Figure~\ref{fig:mttc} shows the graphs $G_1,G_2,G_3$ constructed in each round, and all the four cycles implemented in $\mttc$. The output is the allocation $A$ such that  $A(1) = (2_H,1_C), A(2) = (3_H,2_C), A(3) = (1_H,3_C)$. 
\end{myeg}
It is not hard to see that when all agents have the same importance order, the output of $\mttc$ is the same as outcome of the following $p$-step process. W.l.o.g.~let the importance order is $T_1\rhd T_2\rhd\cdots\rhd T_p$. For each step $i$, we ask the agents to report their preferences over $T_i$, conditioned on items they got in previous steps; then we use TTC to allocate the item, and move to the next step. In particular, when all agents' preference are separable, $\mttc$ coincides with the {\em coordinate-wise core rule}~\cite{Konishi01:On-the-ShapleyScarf}.
\begin{mythm}\label{thm:mttc} When agents'
  preferences are lexicographic, MTTC runs in polynomial-time and satisfies {\em strict-core-selecting} (which implies {\em Pareto optimality} and {\em individual rationality}) and {\em non-bossiness}, and is {\em group strategy-proof} when agents cannot lie about importance orders over types.
\end{mythm}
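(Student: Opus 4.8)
The plan is to establish the polynomial-time bound first (routine), then handle the four axiomatic properties in an order that lets later arguments reuse earlier structure: strict-core-selection first (the heart of the proof), then Pareto optimality and individual rationality as corollaries, then non-bossiness, and finally group strategy-proofness. Throughout, the key structural observation I would isolate and state as a lemma is a \emph{layered decomposition} of the execution: because every agent processes her types in her own fixed importance order, the rounds of MTTC partition naturally so that at any point each agent has been assigned items for a prefix of her importance order, and the edge out of agent $j$ always points to her top remaining item of her current ``active'' type $i_j^*$. I would prove that this is exactly a type-by-type TTC when importance orders agree (already noted in the text), and more generally that the items each agent receives of her $k$-th most important type are determined by a TTC-like sub-process among the agents whose first $k-1$ types have already been resolved. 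This lemma is the main obstacle and the engine for everything else.

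\textbf{Polynomial time and termination.} Each round builds a bipartite-style graph on $|N \cup L| \le n + pn$ vertices; since every agent has out-degree exactly one and every item has out-degree exactly one, the functional-graph structure guarantees at least one cycle, hence at least one item is removed each round, so there are at most $pn$ rounds, each doing $O((pn)^2)$ work. I would state this and move on.

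\textbf{Strict-core-selection.} This is where I expect to spend the most effort. Let $A = \mttc(P)$ and suppose for contradiction a coalition $S$ weakly blocks $A$ via an allocation $B$ of $S$'s endowments, with some $j^\ast \in S$ strictly better off. I would use the importance-order / lexicographic structure: pick the agent $j \in S$ and the smallest index $k$ such that $[B(j)]$ differs from $[A(j)]$ on $j$'s $k$-th most important type and is (weakly) preferred — lexicographic preferences force that on $j$'s first $k-1$ important types $B$ and $A$ agree for $j$, and on the $k$-th type $B(j)$ is weakly better, with strict improvement somewhere forcing a first strict improvement. Then I would trace the standard TTC ``pointing'' argument within the layered sub-process for that type-stage: the agent who is strictly better off at the earliest stage must be pointing, in $B$'s trading cycle, at an item held by someone whom MTTC served no later, and chasing the cycle of ``who got whose item'' back through the MTTC rounds yields a contradiction with the greedy choice MTTC made (it would have formed that cycle itself, or served those agents at least as well). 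The delicate point is that a blocking cycle in $B$ may mix items of several types and several importance-stages, so I would need the layering lemma to peel it stage by stage and reduce to the classical single-type TTC core argument at the earliest stage where $B$ deviates favorably. Individual rationality follows by taking $S=\{j\}$, and Pareto optimality follows by taking $S = N$; alternatively both follow from general facts about strict-core membership.

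\textbf{Non-bossiness and group strategy-proofness.} For non-bossiness, I would argue that if agent $j$ changes her report but her own final bundle $A(j)$ is unchanged, then by the layering lemma her active type and her top-choice edge at each stage she participates in are the same as functions of what she actually received — so the sequence of graphs $G_t$, hence all cycles, hence all other agents' allocations, are unchanged. For group strategy-proofness under truthful importance orders, I would combine strict-core-selection with non-bossiness via the standard implication (a non-bossy, individually-rational, strict-core-selecting mechanism is group strategy-proof on this domain): if a coalition $S$ misreports CPTs and every member weakly gains with one strictly gaining, consider the resulting allocation $A'$ under true preferences; since importance orders are truthful, the layering still applies, and one shows $S$ would then weakly block the honest outcome $A$ — contradicting strict-core-selection. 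The subtlety is that a manipulator's misreported CPT can change which items she points to at later stages, so I would make the induction on importance-stage explicit, showing that the first stage at which a manipulator benefits already yields a blocking coalition at that stage's single-type TTC sub-market. The chief difficulty, again, is keeping the multi-type cycles disentangled; once the layering lemma is in hand, each remaining property reduces to its classical single-type TTC counterpart applied stage by stage.
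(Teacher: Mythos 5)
Your polynomial-time argument and the broad shape of your strict-core argument (induct on the earliest point at which the blocking allocation $B$ deviates from the output) match the paper, but the engine you propose to drive everything else --- the ``layered decomposition'' lemma --- does not exist in the form you state it, and two of the four properties genuinely depend on machinery you have not supplied. When importance orders differ across agents there is no global stage-by-stage decomposition: a single cycle of $G_t$ can mix edges of several types (one agent at her most important type pointing to a house while another agent in the same cycle, at her most important type, points to a car), so the items an agent receives of her $k$-th most important type are not determined by any single-type TTC sub-market, and ``reduce to the classical single-type TTC core argument at the earliest stage'' has no well-defined meaning. The paper's strict-core proof instead inducts on the algorithm's own rounds (the first round $t^*$ at which the allocation restricted to $S$ differs from $B$), and for the remaining properties it builds entirely different machinery: the single-cycle-at-a-time variant $\mttcs$, a partial order over the implemented cycles, and Lemma~\ref{lem:mttcs=mttc} showing that the valid execution orders are exactly the linear extensions of that partial order and all yield the same allocation. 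That order-independence result is what makes cross-profile comparisons possible; your layering lemma is not a substitute for it.

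Concretely, two steps fail. For non-bossiness you assert that if agent $j$ changes her report but keeps her bundle, then the graphs $G_t$, hence all cycles, are unchanged. This is false: $j$'s outgoing edge points to her top \emph{remaining} item of her active type under her \emph{reported} CPT, and in rounds where she is not in a cycle that item may be removed by others; two reports yielding the same final bundle can therefore produce different graphs, different cycles, and potentially different allocations for others --- which is precisely why non-bossiness is nontrivial. The paper closes this by routing both profiles through a canonical profile in which $j$'s received bundle is top-ranked (Lemma~\ref{lem:nb}), which in turn rests on Lemma~\ref{lem:mttcs=mttc}. For group strategy-proofness, the ``standard implication'' from strict-core-selection plus non-bossiness is unavailable here: the strict core can be multi-valued (the paper proves this in a separate proposition), so core-selection does not pin down the outcome, and a coalition that gains by misreporting may be receiving endowments of agents \emph{outside} the coalition, so its gain does not yield a weakly blocking coalition against $\mttc(P)$. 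The paper instead runs two $\mttcs$ executions in parallel --- on $P$ and on a canonicalized $\hat P$ --- implementing common cycles until the first divergence, and uses the fixed importance orders to exhibit a manipulator who is strictly worse off. You would need to either prove an analogue of Lemma~\ref{lem:mttcs=mttc} or find another route to these two properties.
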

\begin{proof} The proof is done in the following steps.

\noindent{\bf Step 1: $\mttc$ outputs an allocation in polynomial time.}
At any round $t$, if $L$ is non empty, there is a cycle in $G_t$ since every node has either: an indegree of $1$ and an outdegree of $1$, or no incoming and no outgoing edges.

At every round the algorithm permanently assigns exactly as many items as there are agents in the cycles, one to each agent. There are exactly as many items of each type as there are agents who have not yet been assigned an item of the type.

At every round, if an agent $j$ is involved in a cycle, she receives exactly one item, the item she has an outgoing edge to. An item once assigned is removed and is never re-assigned. Once an agent receives an item of a given type, she may never receive another item of the same type. The algorithm does not terminate if any items remain unassigned.

At every round of the algorithm, there is at least one agent who is permanently assigned an item of some type. Thus, there are at most $np$ rounds. At each round we must compute the pointing graph $G$. For every agent, there is a single most desired type. It takes $O(n)$ time to find the most preferred remaining item. The algorithm runs in $O(n^2p)$ time.

\vspace{2mm}
\noindent{\bf Step 2: $\mttc$ satisfies strict-core-selecting.}
For the sake of contradiction, let the allocation $A$ produced by the algorithm does not belong to the strict core. Then, there exists a coalition $S$, and an allocation $B$ on $S$ that blocks $A$. Let $t^*$ be the first round in which the algorithP's assignments restricted to $S$ at the end of the round differs from the assignments in $B$. Then, there is some $j \in S$, a type $i$ such that at least one of the following cases hold: 
\begin{enumerate}[label=(\arabic*),topsep=\parskip]\item $j$ is assigned a different item of type $i$ in $B$ than in $A$, or \item the item $j_i$ initially endowed to an agent in $S$ is assigned to a different agent in $B$ than in $A$.\end{enumerate}

Suppose Case (1) holds. There exists some agent $j \in S$, a type $i$ such that $j^*_i = [B(j)]_i \neq [A(j)]_i = \hat{j}_i$. Note that $t^*$ is the first iteration where the allocations differ, and the algorithm assigns items of agents' most desired type. Then, we must have that $j^*_i \succ \hat{j}_i$ since for every type that takes precedence over $i$, the allocation in $B$ is the same as the allocation in $A$. Now, if $j^*_i$ was available at round $t^*$, then we must have that the only outgoing edge from $j$ points at the owner $j^*_i$ and not at $\hat{j}_i$ at round $t$. This is a contradiction to the assumption that $A$ is an output of the algorithm. If $j^*_i$ was unavailable at round $t^*$, then it must already have been assigned to another agent in a strictly earlier round $t' < t^*$ which is a contradiction to our assumption that $t^*$ is the first round where assignments restricted to $S$ differ.

Now, consider Case (2). We will show that it reduces to Case (1). Suppose different agents $\hat{j}$ and $j^*$ receive item $j_i$ in $A$ and $B$ respectively i.e. $[A(\hat{j})]_i = j_i = [B(j^*)]_i, \hat{j} \neq j^*$.
Let $C$ be the cycle that is implemented at round $t^*$ of the algorithm. W.L.O.G. let $C$
consist of edges $(1,2_{i_1}),\dots,(k,1_{i_k}),(1_{i_k},1)$, involving agents $1,\dots,k$ whose most desirable types are $i_1,\dots,i_k$ at round $t^*$. 

Let agent $1 \in S$, $1_{i_k}$ is an item owned by an agent in $S$, and the agents who receive $1_{i_k}$ differs in $A,B$. Now, if $k \in S$, we must have that $[B(k)]_{i_k} \neq [A(k)]_{i_k} = 1_{i_k}$ since $1_{i_k}$ is being assigned to a different agent in $B$ and this reduces to Case (1). If $k \notin S$, there must be some consecutive pair of nodes $j,j+1_{i_j}$ in $C$ such that $2 \le j \le k, j+1 \notin S$. Then, we must have that $[B(j)]_{i_j} \neq [A(j)]_{i_j}$ since $j$ can only be assigned items initially endowed to agents in $S$ in the allocation $B$. Again, this reduces to Case (1).

\vspace{2mm}
\noindent{\bf Step 3: Defining $\mttcs$, the single-cycle-elimination variant of $\mttc$.}
To prove the non-bossiness, monotonicity, and strong group-strategy-proofness, we will consider a class of algorithms called {\em MTTC*}, which is similar to MTTC except that in each round a {\em single} trading cycle is implemented. There are multiple MTTC* algorithms, each corresponds to a different order of implementing cycles. Just by definition we do not know yet whether different MTTC* algorithms correspond to different allocations. Later we will show that, indeed different MTTC* algorithms on the same preferences must output the same allocation. In fact, we will prove a stronger lemma (Lemma~\ref{lem:mttcs=mttc}) that characterizes the executions of all MTTC* algorithms. 

\begin{mydef} Given a housing market $M$ and any profile $P$, let $\mttcs(P)$ denote the set of algorithms, each of which is a modification of \mttc{} (Algorithm~\ref{alg:mttc}), where instead of implementing all cycles in each round, the algorithm implements exactly one available cycle in each round. 
\end{mydef}
We note that $\mttcs(P)$ depends on $P$ because for different profiles the cycles in each round might be different. For each $\ma\in\mttcs(P)$, we let $\order(\ma)$ denote the linear order over the cycles that $\ma$ implements. That is, if $\order(\ma)=C_1\succ \cdots\succ C_k$, then it means that for any $t\le k$, $C_t$ is the cycle implemented by $\ma$ in round $t$.

\begin{myeg}\label{ex:mttcstar} Let $M,P$ be the same as in Figure~\ref{fig:mttc}. Let $C_1,C_2,C_3,C_4$ be the same cycles in Figure~\ref{fig:mttc}. Let $\ma$ be the \mttcs{} algorithm such that for any $t\le 4$, in the $t$-th round $C_i$ is implemented. Then, $\order(\ma)=C_1\succ C_2\succ C_3\succ C_4$.
\end{myeg}

\begin{mydef}\label{dfn:pom}
For any multi-type housing market $M$, let $\cyc(P)$ denote the set of cycles implemented in the execution of $\mttc$ on $P$. We define a partial order $\po(P)$ over $\cyc(P)$ as follows. For every pair of cycles $C_k,C_l$, $C_k \succ C_l$ in $\po(P)$ if one of the following two conditions hold:
\begin{itemize}
\item[1.] There is an agent who gets an item of a more important type in $C_k$ than in $C_l$. That is, $\exists j \in C_k \cap C_l$ such that $\type(C_k(j)) \rhd_j \type(C_l(j))$.
\item[2.]  There is an agent in $C_l$ who prefers an item in $C_k$ over the item she is pointing to in $C_l$ of the same type, conditioned on the item she got in previous rounds. That is, there exists $j \in C_l$ and $j'\in C_k$ such that, $\type(C_k(j'))=\type(C_l(j))$ and $C_k(j')\succ_{j}C_l(j)$ conditioned on the items $j$ got in previous rounds.
\end{itemize}
Then, $\po(P)$ is the transitive closure of the two classes of binary relations mentioned above. Let $\ext(P)$  denote the set of linear orders that extend $\po(P)$. 
\end{mydef}

\begin{myeg}\label{ex:pom} Continuing Example~\ref{ex:mttcstar}, $\po(P)$ is illustrated in Figure~\ref{fig:pom}. $\po(P)=\{C_1 \succ C_2, C_1 \succ C_3, C_1 \succ C_4, C_3 \succ C_4\}$. For all $2\le i\le 4$, $C_1 \succ C_i$ because houses are more important than cars to all agents. $C_3 \succ C_4$ since agent $1$ has a more preferred car in $C_3$ than in $C_4$.
\end{myeg}

\begin{figure}[H]
\centering
\includegraphics[trim=0cm 5.7cm 28.7cm 10.8cm, clip=true, width=0.2\textwidth]{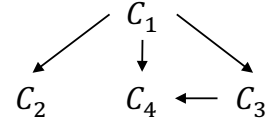}
\caption{$\po(P)$ in Example~\ref{ex:pom}.\label{fig:pom}}
\end{figure}

We are now ready to present the key lemma that establishes the equivalence between  $\mttcs(P)$ and $\ext(P)$. 
\begin{mylemma}\label{lem:mttcs=mttc} For any multi-type housing market $M$ and any profile $P$, we have $\order(\mttcs(P))=\ext(P)$.
\end{mylemma}
\begin{proof} We first prove the $\supseteq$ direction. For any $W\in \ext(P)$, we will prove that there exists an $\mttcs(P)$ algorithm $\ma$ that implements cycles exactly as in $W$. W.l.o.g.~let $W=C_1\succ\cdots\succ C_k$. Suppose for the sake of contradiction, $W$ is not in $\order(\mttcs(P))$. Let $1\le h\le k$ denote the smallest number such that there exists $\ma\in\mttcs(P)$ whose top $h-1$ cycles in $\order(\ma)$ are exactly $C_1\succ \cdots \succ C_{h-1}$ but whose $h$-th cycle in $\order(\ma)$ is not $C_{h}$; let $\ma$ denote this $\mttcs$ algorithm. Let $G_{h}$ denote the graph at the beginning of round $h$ of algorithm $\ma$. 

We now prove that $C_{h}$ must be in $G_{h}$ by showing that each edge $(j, o_i)$ in $C_h$ must be in $G_{h}$. Suppose for the sake of contradiction that this is not true and $(j, o_i)$ is in $C_{h}$ but not in $G_{h}$. There are three cases: (1) agent $j$ has not obtained an item in a more important type than $i$ yet; (2) agent $j$ is pointing to a more preferable item in type $i$ than $o_i$; and (3) agent $j$ has obtained an item in type $i$ in a previous round. Case (1) and (2) correspond to the two cases in the definition of $\po(P)$ (Definition~\ref{dfn:pom}), respectively. Therefore, neither can hold because $W$ is an extension of $\po(P)$, which means that the more preferable items must have been allocated in the first $h-1$ rounds of $\ma$. The third case cannot be true either, because otherwise it means that there exists $C_l$ with $l\le h-1$ where $j$ points to a different item in type $i$ than $o_i$. This means that agent $j$ gets two items in type $i$ in $\mttc(P)$, which is a contradiction. Therefore, $C_{h}$ must be in $G_{h}$. 

However, this contradicts the minimality of $h$. Therefore, $W\in \order(\mttcs(P))$. This proves the $\supseteq$ direction.

We now prove the $\subseteq$ direction. Suppose for the sake of contradiction there exists $\ma\in\mttcs(P)$ such that $\order(\ma)\not\in \ext(P)$. W.l.o.g.~let $\order(\ma)=[C_1\succ \cdots \succ C_{h-1}\succ C_{h}^*\succ \cdots]$, such that there exits no $L\in\ext(P)$ that agrees with $\order(\ma)$ with the order of the top $h-1$ elements (cycles), but there does not exist $L^*\in\ext(P)$ that agrees with $\order(\ma)$ with the order of the top $h$ cycles. W.l.o.g.~let $L=[C_1\succ \cdots \succ C_k]$, where $C_{h}\ne C_{h}^*$. By the $\supseteq$ direction, there exists $\ma_L\in\mttcs(P)$ such that $\order(\ma_L)=L$. 

We claim that $C_{h}^*\in\{C_{h+1},\ldots,C_k\}$. Let $G_{h}$ denote the graph at the beginning of round $h$ in $\ma$. $G_{h}$ must also be the graph at the beginning of round $h$ in $\ma_L$, because the first $h-1$ cycles in $\order(\ma)$ and those in $\order(\ma_L)$ are exactly the same. It follows that $C_{h}^*$ is in $G_{h}$. Therefore, in one of the following rounds in $\ma_L$, $C_h^*$ must be implement, otherwise the items involved in it will never be allocated.

Let $C_h^*=C_l$ for some $h+1\le l\le k$. There must exist $h+1\le l'\le k$  such that $C_{l'}\succ C_l$ in $\po(P)$, otherwise there is an extension $\po(P)$ where the top $h$ cycles are $C_1\succ\cdots\succ C_h\succ C_l$, which contradicts the assumption that no order in $\ext(P)$ agrees with $\order(\ma)$ on the top $h$ cycles. However, by the definition of $\po(P)$ and by the fact that $C_{l'}$ has not been implemented in the first $h-1$ round of $\ma$ (as well as $\ma_L$), it is impossible that $C_l$ is in $G_h$. This contradicts the assumption that $C_h^*=C_l$ is implemented in round $h$ by $\ma$. This proves the $\subseteq$ direction.
\end{proof}


It follows directly from Lemma~\ref{lem:mttcs=mttc} that the outcomes of all $\mttcs(P)$ are the same, which is $\mttc(P)$.
The next lemma states that under $\mttc$, no single agent cannot change the output of $\mttc$ by claiming that the bundle she got in the original market via $\mttc$ is her top choice. This property is similar to {\em monotonicity} for voting rule.

\begin{figure*}
\centering
\begin{tabular}{ccc}
\begin{tabular}{|c|c|} \hline
Agent $1$ & $H \rhd C$ \\ \hline
$H$ prefs. & $2_H \succ 1_H \succ 3_H$ \\ \hline
$C$ prefs. & $1_C \succ 3_C \succ 2_C$ \\ \hline
\end{tabular} & 
\begin{tabular}{|c|c|} \hline
Agent $2$ & $C \rhd H$ \\ \hline
$H$ prefs. & $3_H \succ 2_1\succ 1_1$ \\ \hline
$C$ prefs. & $2_H \succ $ others \\ \hline
\end{tabular} &
\begin{tabular}{|c|c|} \hline
Agent $3$ & $H \rhd C$ \\ \hline
$H$ prefs. & $1_H \succ 3_H \succ 2_H$ \\ \hline
$C$ prefs. & $1_C \succ 3_C \succ 2_C$ \\ \hline
\end{tabular}
\end{tabular}
\caption{\label{fig:eg_impossibility} An example of $3$ agents' separable lexicographic preferences over $2$ categories, Houses ($H$) and Cars ($C$).}
\end{figure*}

\begin{mylemma}\label{lem:nb}
For any multi-type housing market $M$, any profile $P$, and any agent $j$, let $P'$ be any market where agent $j$ changes her top bundle to be $\mttc(P)(j)$ and other agents' preferences are the same, then $\mttc(P')=\mttc(P)$.
\end{mylemma}
\begin{proof}
W.l.o.g.~let $j=1$ and $\mc_1=\{C_{k_1},C_{k_2},\ldots,C_{k_p}\}$ be the set of cycles that involve agent $j$ in $\cyc(P)$. Let $L\in \ext(P)$. 
By Lemma~\ref{lem:mttcs=mttc}, there exists $\ma\in \mttcs(P)$ such that $\order(\ma)=L$.

We next prove that $\ma$ can be applied to $P'$ and the output is the same as $\ma(P)$. To see this, we will examine two parallel runs of the execution of $\ma$ with input $P$ and with input $P'$, respectively. For any $t\ge 0$, let $G_t$ and $G_t'$ denote the graphs of $\mttcs$ in the beginning of round $t$ w.r.t.~$P$ and $P'$, respectively. We claim that for all $t\le k$, $G_t=G_t'$ by induction. The base case of $t=0$ is straightforward. Suppose the claim is truth for all $t\le h-1$. Then if the $h$-th cycle in $L$ is not in $\mc_1$, then $G_h=G_h'$ because all agents except agent $1$ have the same preferences in $P$ and in $P'$. If the $h$-th cycle in $L$ is in $\mc_1$, then suppose agent $1$ points to an item $o_i\in T_i$ in $G_t$. This means that $[\mttc(P)(1)]_i=o_i$. Therefore, agent $1$ also points to $o_i$ in $G_t'$. This proves that $G_t=G_t'$ for all $t\le k$. 

Therefore, $\ma$ can be successfully applied to $P'$ and the allocation is the same as $\mttc(P)$. By Lemma~\ref{lem:mttcs=mttc}, $\mttc(P')$ is the same as the output of $\ma$ on $P'$, which proves the lemma.
\end{proof}

\vspace{2mm}
\noindent{\bf Step 4: $\mttc$ satisfies non-bossiness.} Suppose for the sake of contradiction $\mttc$ does not satisfy non-bossiness. W.l.o.g.~let $M$ and $P'$ denote two markets where only agent $1$'s preferences are different, $\mttc(P)(1)=\mttc(P')(1)$, yet $\mttc(P)\ne \mttc(P')$. Let $\hat M$ denote the market obtained from $M$ by letting agent $1$'s top-ranked bundle to be $\mttc(P)(1)$. By Lemma~\ref{lem:nb}, $\mttc(\hat M)=\mttc(P)$ and $\mttc(\hat M)=\mttc(P')$, which is a contradiction.

\vspace{2mm}
\noindent{\bf Step 5: 
$\mttc$ is strong group-strategyproof when the agents cannot lie about the importance orders.}
Suppose for the sake of contradiction that the proposition is not true in a multi-type housing market $M$. Let $P$ denote the truthful profile, $S\subseteq N$ denote the group of strategic agents, and $P'$ denote the untruthful profile where preferences of all truthful agents are the same as in $P$. Let $\hat P$ denote the profile obtained from $P'$ by letting the top-ranked bundle of all agents $j\in S$ be $\mttc(P')(j)$. By sequentially applying Lemma~\ref{lem:mttcs=mttc} to all agents in $S$, we have that $\mttc(\hat P)=\mttc(P')$. In particular, all agents in $S$ get the same bundles in $\mttc(\hat P)$ as those in $\mttc(P')$.

We now compare side by side two parallel runs of two $\mttcs$ algorithms: $\ma\in\mttcs(P)$ and $\hat\ma\in\mttcs(\hat P)$. We will define   $\ma$ and $\hat \ma$ dynamically. Starting with $t=0$, let $G_t$ and $\hat G_t$ denote the graphs of $\mttcs$ at the beginning of round $t$ for input $P$ and input $\hat P$, respectively. If there is a common cycle $C$ in $G_t$ and $\hat G_t$, then we let both $\ma$ and $\hat \ma$ implement $C$, and move on to the next round. 

Because $\mttc(P)\ne \mttc(P')=\mttc(\hat P)$, there exists a round $t$ such that $G_t$ and $\hat G_t$ does not have a common cycle. Let $t^*$ be the earliest such round.  We note that for any $j\not\in S$, the outgoing edge of $j$ in $G_{t^*}$ and that in $\hat G_{t^*}$ must be same, because $j$ is truthful and the remaining items in $G_{t^*}$ and in $\hat G_{t^*}$ are the same. Let $C$ denote an arbitrary cycle in $G_{t^*}$. It follows that there exists $j\in S$ such that $(j,o_{i})\in C$ and $(j,s_{i^*})\in \hat G_{t^*}$, where $o_{i}\ne s_{i^*}$. Because no agent is allowed to lie about the importance order, and the allocation of all previous rounds are the same in $\ma$ and in $\hat \ma$, we must have $i=i^*$, $s_{i^*}$ is in $G_t$, and the items agent $j$ gets in all previous rounds in $\ma$ is the same as that in $\hat\ma$. Therefore, agent $j$ strictly prefers $o_i$ to $s_{i^*}$ give her allocation of more important types in previous rounds in $\ma$/$\hat\ma$. We note that because agent $j$'s top-ranked bundle is her final allocation in $\mttc(\hat P)$, agent $j$ must get $s_{i^*}$ in $\mttc(\hat P)$. Because agent $j$ gets $o_i$ in $\mttc(P)$, we have that $\mttc(P)(j)\succ_j\mttc(\hat P)$. This contradicts the assumption that none of agents in $S$ is strictly worse off in $\mttc(\hat P)$. Therefore, $\mttc$ is strong group-strategyproof when agents cannot lie about the importance orders.
\end{proof}
 
\begin{myprop}
$\mttc$ is not strategy-proof w.r.t.~only misreporting the importance order (i.e.~without misreporting local preferences over types).
\end{myprop}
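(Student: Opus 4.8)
The plan is to give a concrete counterexample: the three-agent, two-type market in Figure~\ref{fig:eg_impossibility}, in which agent~$3$ can gain by reporting the importance order $C\rhd H$ in place of her true order $H\rhd C$, while reporting her local preferences over $H$ and over $C$ truthfully.

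First I would run $\mttc$ on the truthful profile $P$ and record the trace. In round~$1$ the only cycle is $2\to 2_C\to 2$, so agent~$2$ keeps her own car $2_C$. In round~$2$ the cycle $1\to 2_H\to 2\to 3_H\to 3\to 1_H\to 1$ appears, assigning houses $2_H,3_H,1_H$ to agents $1,2,3$ respectively; agent~$2$ is now complete. In round~$3$ the cycle $1\to 1_C\to 1$ gives agent~$1$ her own car, and in round~$4$ agent~$3$ is left with $3_C$. Hence $\mttc(P)$ assigns agent~$3$ the bundle $(1_H,3_C)$. Next I would run $\mttc$ on the profile $P'$ that differs from $P$ only in that agent~$3$'s importance order is $C\rhd H$. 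Now in round~$1$ agent~$3$ points at her top remaining car $1_C$, but the only cycle is still $2\to 2_C\to 2$. In round~$2$ the cycle $1\to 2_H\to 2\to 3_H\to 3\to 1_C\to 1$ forms, giving houses $2_H,3_H$ to agents $1,2$ and the car $1_C$ to agent~$3$; agent~$2$ is complete. In round~$3$ the remaining cycle $1\to 3_C\to 3\to 1_H\to 1$ gives agent~$1$ the car $3_C$ and agent~$3$ the house $1_H$. Thus $\mttc(P')$ assigns agent~$3$ the bundle $(1_H,1_C)$.

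Finally I would compare the two bundles under agent~$3$'s true lexicographic preferences with importance order $H\rhd C$: $(1_H,1_C)$ and $(1_H,3_C)$ share the same (top-ranked) house $1_H$, and agent~$3$'s true car preference ranks $1_C\succ 3_C$, so $(1_H,1_C)\succ_3 (1_H,3_C)$. Hence agent~$3$ strictly benefits from misreporting only her importance order, which proves the proposition. The only non-trivial part of the argument is the bookkeeping in the two executions of $\mttc$ — checking that in each round the cycle that forms is exactly the one claimed and that no other cycle becomes available earlier — so that verification, rather than any conceptual step, is where the care is needed.
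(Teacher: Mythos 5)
Your proof is correct and takes essentially the same approach as the paper's: exhibit a concrete profile and verify by tracing $\mttc$ that a single agent gains by misreporting only her importance order. The only difference is the choice of instance — you manipulate agent~$3$ in the profile of Figure~\ref{fig:eg_impossibility}, whereas the paper manipulates agent~$1$ in the profile of Figure~\ref{fig:mttc} — and your round-by-round bookkeeping for both executions checks out.
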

\begin{proof} Consider the preferences in Figure~\ref{fig:mttc}. We recall that when agents are truthful, the output of $\mttc$ is $(2_H,1_C),(3_H,2_C),(1_H,3_C)$ to agents $1,2,3$, respectively (Example~\ref{eg:mttc}). If agent $1$ misreport the importance order as $C \rhd H$ without misreporting any preferences over types, then the output of $\mttc$ is $(2_H,3_C),(3_H,2_C),(1_H,1_C)$ to agents $1,2,3$, respectively. We note that agent $1$ prefers $(2_H,3_C)$ to $(2_H,1_C)$. This proves the proposition.
\end{proof}

\begin{myprop}
The strict core of a multi-type housing market can be multi-valued, even when agents' preferences are separable and lexicographic w.r.t.~the same order.
\end{myprop}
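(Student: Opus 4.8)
The plan is to reuse the three‑agent, two‑type profile $P$ of Figure~\ref{fig:mttc}, which is separable, lexicographic, and uses the common importance order $H\rhd C$, and to show that its strict core contains at least two distinct allocations: the output of $\mttc$, namely $A$ with $A(1)=(2_H,1_C),A(2)=(3_H,2_C),A(3)=(1_H,3_C)$ (Example~\ref{eg:mttc}, which is in the strict core by Theorem~\ref{thm:mttc}), and the allocation $A'$ with $A'(1)=(2_H,3_C),A'(2)=(3_H,2_C),A'(3)=(1_H,1_C)$.

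\noindent First I would pin down the house coordinate of every strict‑core allocation. Since the three agents' top houses are $2_H,3_H,1_H$, running TTC on houses alone yields the single $3$‑cycle $\sigma$ with $\sigma(1)=2_H,\sigma(2)=3_H,\sigma(3)=1_H$, in which every agent gets her favourite house. If some allocation $\hat A$ assigned houses by $\tau\neq\sigma$, the grand coalition $N$ would weakly block it with the allocation $B$ that keeps each agent's car in $\hat A$ but reassigns houses according to $\sigma$: since $H$ is lexicographically dominant, every agent weakly improves (same car, weakly better house) and the agent with $\tau(k)\neq\sigma(k)$ strictly improves. Hence the house coordinate of any strict‑core allocation must be $\sigma$.

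\noindent Second, with houses fixed to $\sigma$, I would show that an allocation $(\sigma,A_C)$ is blocked only by $N$, and only when $A_C$ is Pareto‑dominated by another car assignment w.r.t.\ the (separable) car preferences. No singleton $\{j\}$ blocks, because $(\sigma,A_C)$ gives $j$ her top house while $O(j)$ does not. No two‑agent coalition $\{i,j\}$ blocks: one of $i,j$ holds in $(\sigma,A_C)$ a house outside $\{i_H,j_H\}$ (in fact her favourite), so any reallocation of $\{i_H,j_H\}$ strictly worsens that agent. Finally, for $S=N$ any blocking $B$ whose house coordinate differs from $\sigma$ strictly worsens some agent in houses (hence in bundle), so $B$ must use $\sigma$ on houses, and then $B$ weakly blocks iff its car coordinate Pareto‑dominates $A_C$. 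Therefore the strict core equals $\{(\sigma,A_C):A_C\text{ is Pareto‑optimal w.r.t.\ the car preferences}\}$.

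\noindent Third, I would verify that the car preferences of Figure~\ref{fig:mttc} admit exactly two Pareto‑optimal car assignments. Any assignment not giving $2_C$ to agent $2$ is Pareto‑dominated (rotating $2_C$ to agent $2$ strictly helps agent $2$ and the agent who held $2_C$, since agents $1$ and $3$ rank $2_C$ last). Given agent $2$ gets $2_C$, the cars $\{1_C,3_C\}$ may go to agents $1,3$ in either order, and both are Pareto‑optimal (whoever holds $3_C$ has her top car and the other cannot be improved without hurting her). These two assignments produce exactly $\mttc(P)$ and $A'$, which are distinct; so the strict core is multi‑valued. The main obstacle is the second step: one must combine the lexicographic dominance of $H$ with the fact that $\sigma$ already gives every agent her top house to rule out deviations by small coalitions; once the characterization ``strict core $=$ $\sigma$ paired with a Pareto‑optimal car assignment'' is established, the rest is a routine check.
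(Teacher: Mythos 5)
Your proposal is correct: every step checks out against the preferences of Figure~\ref{fig:mttc}, and it establishes the proposition. It uses the same example and the same two witness allocations as the paper, but the verification is organized quite differently. The paper's proof is a short ad hoc check of one allocation, $B(1)=(2_H,3_C)$, $B(2)=(3_H,2_C)$, $B(3)=(1_H,1_C)$: since agents $1$ and $2$ already hold their top bundles, only agent $3$ could strictly gain in a blocking coalition, and this is ruled out by Pareto optimality of $B$ together with $(1_H,1_C)\succ_3(3_H,3_C)$; membership of the second allocation $\mttc(P)$ in the strict core is then inherited from Theorem~\ref{thm:mttc}. You instead prove a full characterization of the strict core of this instance via a two-stage decomposition: lexicographic dominance of $H$ forces the house coordinate of any strict-core allocation to be the top-trading permutation $\sigma$, small coalitions are killed because $\sigma$ is a single $3$-cycle, and the grand coalition reduces the problem to Pareto optimality of the car coordinate, of which there are exactly two. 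Your route is longer but self-contained (it does not need Theorem~\ref{thm:mttc} to certify the second allocation) and yields strictly more information, namely that the strict core here has exactly two elements; the paper's argument is quicker but verifies only what is needed.
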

\begin{proof}
Consider the preferences in Figure~\ref{fig:mttc}. Let $B$ denote the allocation such that $B(1) = (2_H,3_C), B(2) = (3_H,2_C), B(3) = (1_H,1_C)$. For the sake of contradiction, suppose $S$ be a blocking coalition to $B$. We can observe that agents $1,2$ each receive their top bundles in $B$ and have no incentive to participate in a coalition. Therefore $S=N$. However, it can be verified that $B$ is Pareto optimal. Further, agent $3$ cannot benefit by not participating since $(1_H,1_C)\succ (3_H,3_C)$. This means that there is no coalition that blocks $B$, which is a contradiction.
\end{proof}

The next theorem states that, not only is $\mttc$ susceptible to misreporting importance order, but also, all mechanisms that satisfies individual rationality and Pareto optimality do. We note the $\mttc$ satisfies {\em strict-core-selecting}, which implies individual rationality and Pareto optimality.
\begin{mythm}\label{thm:impossibility} For any multi-type housing market $M$ with $n\ge 3$ and $p\ge 2$, there is no mechanism that satisfies individually rationality, Pareto optimality, and strategy-proofness, even when agents' preferences are lexicographic and separable.
\end{mythm}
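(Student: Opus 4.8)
The natural approach is to exhibit a single fixed profile on three agents and two types that already appears in the paper—Figure~\ref{fig:eg_impossibility} is clearly placed there for exactly this purpose—and to show by a finite case analysis that no individually rational, Pareto optimal, and strategy-proof mechanism can assign any allocation to it. First I would enumerate the allocations of this market that are both individually rational and Pareto optimal; because there are only $n^2 = 9$ bundles and three agents, and because the preferences in Figure~\ref{fig:eg_impossibility} are separable and lexicographic, this set is small and can be listed explicitly. Call the candidates $A^{(1)}, A^{(2)}, \dots$. For each candidate $A^{(r)}$, I would identify an agent $j$ and an alternative report $R_j'$ (typically another separable lexicographic preference, e.g.\ one that misstates the importance order or swaps two items within a type) such that, on the profile $(R_j', R_{-j})$, \emph{every} IR and Pareto optimal allocation gives $j$ a bundle she strictly prefers (under her true preference) to $A^{(r)}(j)$. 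Since any IR and PO mechanism must output one of those allocations on the manipulated profile, agent $j$ profits by reporting $R_j'$, contradicting strategy-proofness.

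The bookkeeping step is to make the ``every IR+PO allocation on the deviated profile is better for $j$'' claim precise for each candidate. Concretely, for each of the (few) candidate allocations $A^{(r)}$ on the truthful profile, and each proposed deviation, I would again list the IR+PO allocations of the deviated market and check the preference comparison for the deviating agent against her true ranking. This is a routine but slightly tedious verification; I would present it as a short table rather than prose. To keep the argument self-contained I would first note two reductions: (i) IR already pins down a lot—an agent never receives an item of a type that is worse for her than her own endowed item in that type when her preference over that type ranks her endowment near the top—and (ii) under separable lexicographic preferences, Pareto optimality over $\mathcal T$ is equivalent to Pareto optimality type-by-type composed with the lexicographic structure, which collapses the search further. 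Using $n=3$ and $p=2$ and these reductions, each allocation set has only a handful of members.

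Finally, I would explain why the restriction $n \ge 3$, $p \ge 2$ is used and why the impossibility then extends to all larger markets: given a market with $n' \ge 3$ agents and $p' \ge 2$ types, embed the three-agent two-type bad instance by having the remaining $n' - 3$ agents have preferences that make their own endowment the unique top bundle and strongly separate them from the three ``active'' agents (so that IR forces every such agent to keep her whole endowment in every IR allocation, and the extra $p' - 2$ types are similarly frozen by making each agent top-rank her own item in those types). Then any IR+PO mechanism on the large market restricts to an IR+PO mechanism on the embedded small market, and a profitable manipulation there lifts to one here; hence the impossibility is inherited. The presence of the worked profile in Figure~\ref{fig:eg_impossibility} strongly suggests the authors' proof proceeds exactly along these lines.

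\textbf{Anticipated main obstacle.} The crux is choosing the deviations cleverly so that the case analysis stays short: a careless choice of $R_j'$ may leave several IR+PO allocations on the deviated profile, not all of which beat $A^{(r)}(j)$ for $j$, forcing a larger branching. The right move is to exploit the lexicographic structure—deviations that flip the importance order tend to drastically change which trading cycles are ``forced'' by IR and PO—so that the deviated market has an essentially unique IR+PO outcome favorable to the manipulator. Getting that alignment, so that every candidate truthful allocation is killed by \emph{some} such sharp deviation, is where the real work lies; everything else is enumeration.
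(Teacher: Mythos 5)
Your plan coincides with the paper's proof: it fixes exactly the profile of Figure~\ref{fig:eg_impossibility}, observes that only two allocations are both individually rational and Pareto optimal, and for each exhibits a misreport after which the deviated profile has a unique IR and Pareto optimal allocation that the deviator strictly prefers (the paper's choices are agent $1$ flipping her importance order to $C\rhd H$ against the first allocation, and agent $3$ flipping her importance order and her house preferences against the second). The one substantive piece you defer---identifying those two sharp deviations---is precisely what the paper supplies, and your embedding argument for general $n\ge 3$, $p\ge 2$ is a sensible completion of a step the paper leaves implicit.
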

\begin{proof}
For the sake of contradiction, let $f$ be any mechanism that is individually rational, Pareto-optimal and strategy-proof. 
Consider the agents' lexicographic and separable preferences $P$ as in Figure~\ref{fig:eg_impossibility}. Explicitly, $P$ is the following\\
$1$:  $(2_H,1_C) \succ (2_H,3_C) \succ (2_H,2_C) \succ {\bf (1_H,1_C)} \succ $ others.\\
$2$: $(3_H,2_2) \succ {\bf (2_H,2_C)} \succ $ others.\\
$3$: $(1_H,1_C) \succ (1_H,3_C) \succ (1_H,2_C) \succ (3_H,1_C) \succ {\bf (3_H,3_C)} \succ $ others.

The only individually rational allocations are: 
\begin{enumerate}[topsep=\parskip,leftmargin=*,labelindent=0pt,label=(\arabic*)]
\item $((2_H,3_C),(3_H,2_C),(1_H,1_C))$.
\item $((2_H,1_C),(3_H,2_C),(1_H,3_C))$.
\item $((1_H,1_C),(2_H,2_C),(3_H,3_C))$.
\end{enumerate}
Only (i) and (ii) are also Pareto-optimal.

Since $f$ is individually rational and Pareto optimal, $f(P)$ must be either $((2_H,3_C),(3_H,2_C),(1_H,1_C))$, or $((2_H,1_C),(3_H,2_C),(1_H,3_C))$. We will show that in either case there is some agent who has an incentive to misreport her preferences.

Suppose $f(P)=((2_H,3_C),(3_H,2_C),(1_H,1_C))$. Then, consider the case where agent $1$ misreports her lexicographic order as $2 \rhd 1$. Agent $1$'s preferences over bundles is $(2_H,1_C) \succ (1_H,1_C) \succ $ others. The allocation $((2_H,3_C),(3_H,2_C),(1_H,1_C))$ is not individually rational w.r.t. the profile with agent $1$'s misreported preferences. The only allocation that is both individually rational and Pareto optimal w.r.t. the misreported preferences is $((2_H,1_C),(3_H,2_C),(1_H,3_C))$. Therefore, $f$ must select this allocation. Note that agent $1$ does strictly better, $(2_H,1_C) \succ (2_H,3_C)$, when she misreports. This contradicts the assumption that $f$ is strategy-proof.

Suppose $f(P)=((2_H,1_C),(3_H,2_C),(1_H,3_C))$. Then consider the case where agent $3$  misreports her lexicographic order as $2 \rhd 1$, and her local preferences over type $1$ as $3_H \succ 1_H \succ 2_H$. Agent $3$'s preference over bundles is $(3_H,1_C) \succ (1_H,1_C) \succ (2_H,1_C) \succ (3_H,3_C) \succ $ others. The only allocation that is both individually rational and Pareto optimal w.r.t.~the profile with agent $3$'s misreported preferences is $((2_H,3_C),(3_H,2_C),(1_H,1_C))$. Therefore, $f$ must select this allocation. Note that agent $3$ gets a strictly better bundle, $(1_H,1_C) \succ_3 (1_H,3_C)$, when she misreports her preferences.This contradicts the assumption that $f$ is strategy-proof.

Therefore, such a mechanism $f$ does not exist. This proves the theorem.
\end{proof}


\section{Computing the Membership of the Strict Core}
\begin{mydef}[InStrictCore]
Given a multi-type housing market $M$, agents' preferences $P$, and an allocation $A$, we are asked whether $A$ is a strict core allocation w.r.t.~$M$.
\end{mydef}

\begin{mythm}
InStrictCore is co-NPC even when agents have separable lexicographic preferences over $p \geq 2$ types.
\end{mythm}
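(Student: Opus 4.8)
The plan is to prove both inclusions. \emph{Membership in coNP} is the easy half. A NO-instance of InStrictCore — an allocation $A$ that is weakly blocked — is witnessed by a pair $(S,B)$, where $S\subseteq N$ is a coalition and $B$ assigns the endowments $\{O(j):j\in S\}$ bijectively among the agents of $S$. Such a witness has size polynomial in the input, and it can be checked in polynomial time: verify that $B$ uses exactly the items owned by $S$ with no repetition, that $B(j)\succeq_j A(j)$ for every $j\in S$, and that $B(j)\succ_j A(j)$ for at least one $j$. Since preferences are separable and lexicographic, comparing two bundles reduces to scanning the importance order $T_1\rhd\cdots\rhd T_p$ and, at the first type where the bundles differ, comparing the two items in the corresponding (unconditional, since separable) preference order. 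Hence InStrictCore $\in$ coNP.

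For the hardness half I would reduce an NP-complete problem — 3SAT works, though a suitable covering problem would also do — to the question ``is $A$ \emph{not} in the strict core?''. Fix $p=2$ with types $H$ (more important) and $C$, and construct a market together with a status-quo allocation $A$ whose house component is deliberately ``slightly suboptimal'', so that real trading opportunities exist but are tightly governed by the agents' preferences. The design principle is a split of roles between the two layers. The \emph{house layer} carries the combinatorial skeleton: the only house-permutations compatible with all agents' preferences (those that move every participant to a weakly better house) are unions of a prescribed family of gadget cycles, and choosing which gadget cycles to include is forced to encode a globally consistent choice — for 3SAT, a truth assignment (a variable gadget contributes a pair of house-cycles through a shared pivot agent, so at most one can appear in a vertex-disjoint cycle cover) together with, for each clause, the participation of at least one of its literal-agents. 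The \emph{car layer} carries the certificate of strict improvement: cars are arranged so that the car component of a candidate coalition can be made weakly better for everyone, and strictly better for one dedicated ``indicator'' agent, precisely when the chosen house-cycles activate every clause gadget. So a weakly blocking coalition exists iff the formula is satisfiable.

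The correctness proof has the two usual directions. \emph{Completeness} (satisfiable $\Rightarrow$ blocked): from a satisfying assignment, assemble the coalition out of the corresponding gadget cycles together with the clause and indicator agents, and exhibit the reallocation $B$; every participant gets a weakly better bundle and the indicator agent gets a strictly better one. \emph{Soundness} (unsatisfiable $\Rightarrow$ $A$ in the strict core) is, I expect, the main obstacle. The trouble is that a blocking coalition need not respect the gadget structure: it may freely mix ``house-improvers'' (agents who obtain a strictly better house and hence face no constraint on their car, because preferences are separable) with ``house-keepers'' (agents who retain their $A$-house and therefore require a weakly better car), and its house-permutation and its car-permutation may be arbitrary unions of cycles, subject only to having the same support. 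One must therefore engineer $A$ and the preferences so that (a) the only weak house-trades available to any set of agents are exactly the gadget cycles, and (b) every agent lying outside an activated gadget strictly prefers its $A$-bundle to every bundle it could receive within any coalition, so that no agent can be recruited ``for free'' as slack to stitch together an otherwise illegal coalition. With these two structural lemmas, any blocking coalition decomposes into gadget cycles; the presence of the indicator agent forces all clause gadgets to be activated; and the variable gadgets force a consistent assignment — contradicting unsatisfiability. I would finish by checking that the whole construction is polynomial and that every agent's reported preference is genuinely separable and lexicographic with respect to the single order $H\rhd C$, which establishes that InStrictCore is coNP-complete already for $p\ge 2$.
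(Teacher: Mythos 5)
Your coNP-membership argument is fine and matches the paper's (brief) observation, and your high-level architecture for the hardness reduction is the same as the paper's: reduce from 3SAT with $p=2$ and a common importance order, use the more important type to force a globally consistent truth assignment via vertex-disjoint trading cycles (with a shared pivot item preventing both the positive and negative chain of a variable from being activated --- the paper realizes this by making both $x_m$ and $\bar x_m$ point to the unique item $[e_1]_1$), and use the less important type to certify clause satisfaction. You also correctly locate the crux in the soundness direction. However, the proposal stops exactly where the proof begins: your conditions (a) and (b) are stated as desiderata, not established, and they are the entire technical content of the theorem. The paper spends a sequence of lemmas and claims on precisely this --- showing that membership of any literal agent $1_i^j$ in a coalition propagates to the whole chain $1_i^1,\dots,1_i^{k_i},x_i$ with no slack in type $1$; that the auxiliary agents $d_i$ can never join any blocking coalition; and that any coalition touching the structure must absorb all clause agents and $e_1,e_2$. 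Without exhibiting concrete separable lexicographic preferences and verifying these propagation lemmas, the reduction is a blueprint, not a proof.

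The one concrete mechanism you do commit to --- a single dedicated ``indicator'' agent whose car improves ``precisely when the chosen house-cycles activate every clause gadget'' --- would fail. A strict improvement for one agent is witnessed by that agent receiving one type-$2$ item from one other coalition member; with separable preferences over individual items there is no way to make that single swap conditional on a global conjunction over all clauses. The paper avoids this by distributing the strictness requirement: under $A$, each clause agent $c_j$'s type-$2$ item is the endowment of the previous clause agent or of a $d_i$ agent, and the $d_i$ agents are provably excluded from every coalition, so each $c_j$ that joins a coalition is individually forced to acquire the type-$2$ item of some literal agent satisfying $c_j$ (and that literal agent must be active in the type-$1$ skeleton). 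It is this per-clause forcing, threaded through the excluded $d_i$ chain, that encodes ``every clause is satisfied''; you would need to replace your indicator agent with something of this kind for the soundness direction to go through.
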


\begin{proof}
We start by noting that given a blocking coalition $S$ and an allocation $B$ it is easy to check whether $S$ blocks $A$ when agents have lexicographic preferences.

We show a reduction from 3-SAT. An instance of 3-SAT is given by a formula $F$ in 3-CNF, consisting of clauses $c_1,...,c_n$ involving Boolean variables $x_1,...,x_m$ that can take on values in \{0,1\}, and we are asked whether there is a valuation of the variables that satisfies $F$.

\begin{myeg}\label{eg:3sat}
$F = (x_1 \vee x_2 \vee x_3) \wedge (\bar{x}_1 \vee \bar{x}_2 \vee \bar{x}_3)$ is a formula in 3-CNF involving variables $x_1,x_2,x_3$ and clauses $c_1 = x_1 \vee x_2 \vee x_3, c_2 = \bar{x}_1 \vee \bar{x}_2 \vee \bar{x}_3$. We will use this example to illustrate the proof.
\end{myeg}

Given an arbitrary instance $I$ of 3-SAT, we construct an instance $J$ of InStrictCore where $P$ is a profile of lexicographic preferences over $p=2$ types where all agents have the same importance order $1 \rhd 2$, and an allocation $A$ as follows:

\vspace{2mm}
{{\bf Agents:}
\begin{itemize}[topsep=\parskip]
\item For every clause $c_j$, we have an agent $c_j$.
\item For every Boolean variable $x_i$, we have agents $1_i^j,0_i^j$, one for every clause $c_j$ that involves literals of the variable $x_i$.
\item For every variable $x_i$, we have additional agents $x_i$ and $\bar{x}_i$.
\item Additionally, for every variable $x_i$, we have an agent $d_i$.
\item Lastly, we add agents $e_1,e_2$.
\end{itemize}}

\vspace{2mm}
{\bf Initial Endowments:} Every agent $a$ is initially endowed with the bundle $([a]_1,[a]_2)$.

\vspace{2mm}
{\bf Preferences and Allocations: }
We will represent agents' lexicographic preferences in $P$ with importance order $1 \rhd 2$ for every agent, and their allocated items in $A$ by 2 directed graphs $G_1,G_2$, one for each type. $G_1,G_2$ have a node for each agent.

\vspace{2mm}
A $dashed$ edge $(a,b)$ in $G_k$ represents \begin{enumerate*}[label=(\roman*)]\item the preference $[b]_k \succ [a]_k$, and \item the allocation $[A(a)]_k = [b]_k$.\end{enumerate*} A $solid$ edge $(a,b)$ indicates a strict preference $[b]_k \succ [A(a)]_k$. The absence of an edge $(a,b)$ indicates $[b]_k \prec [a]_k$.

\vspace{2mm}
\noindent \textendash~The graph $G_1$ corresponding to Example~\ref{eg:3sat} is in Figure~\ref{fig:g1}. 
\begin{figure}[htp]
\centering
\resizebox{\linewidth}{!}{%
\includegraphics[trim=0cm 0cm 18.2cm 13cm, clip=true, width=0.5\textwidth]{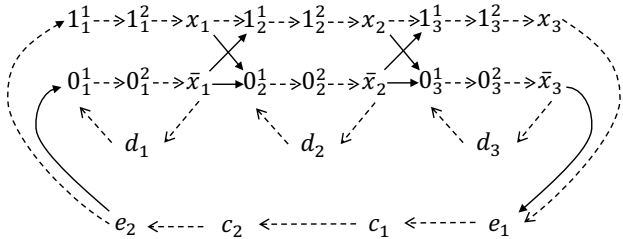}
}
\caption{\label{fig:g1} The graph $G_1$ represents agents' preferences and their allocations in $A$ over type $1$ corresponding to Example~\ref{eg:3sat}. Here, $dashed$ edges represent allocations in $A$, and preference over initial endowment, and $solid$ edges represent strict preference over allocation in $A$. For example, for agent $1_1^2$, $[A(1_1^2)]_1 = [1_2^1]_1 \succ [1_1^2]$ and $[0_2^1]_1 \succ [1_2^1]_1$.}
\end{figure}

\vspace{2mm}
Formally, we construct $G_1$ as follows: 
\begin{itemize}[topsep=\parskip]
\item For each variable $x_i$, for each $j \le k_{i-1}$, we add $dashed$ edges $(1_i^{j},1_i^{j+1})$ and $(1_i^{k_i},x_i)$.
\item For each variable $x_i$, for each $j \le k_{i-1}$, we add $dashed$ edges $(0_i^{j},0_i^{j+1})$ and $(0_i^{j},\bar{x}_i)$.
\item For $i \le m$, we add a $dashed$ edge $(x_i,1_{i+1}^1)$ and $solid$ edges $(x_i,0_{i+1}^1),(\bar{x}_i,1_{i+1}^1),(\bar{x}_i,0_{i+1}^1)$.
\item For $i \le m$, we add $dashed$ edges $(\bar{x}_i,d_i),(d_i,0_i^1)$.
\item For each of $j \le n-1$, we add $dashed$ edges $(c_j,c_{j+1})$. 
\item We add $dashed$ edges $(e_1,c_1),(c_n,e_2)$.
\item Finally, we add $dashed$ edges $(e_2,1_1^1),(x_m,e_1)$, $solid$ edges $(e_2,0_1^1),(\bar{x}_m,e_1)$.
\end{itemize}

\vspace{2mm}
\noindent \textendash~We illustrate the construction of $G_2$ in Figure~\ref{fig:g2}.
\begin{figure}[htp]
\centering
{%
\includegraphics[trim=21.5cm 0cm 0cm 10.5cm, clip=true, width=0.4\textwidth]{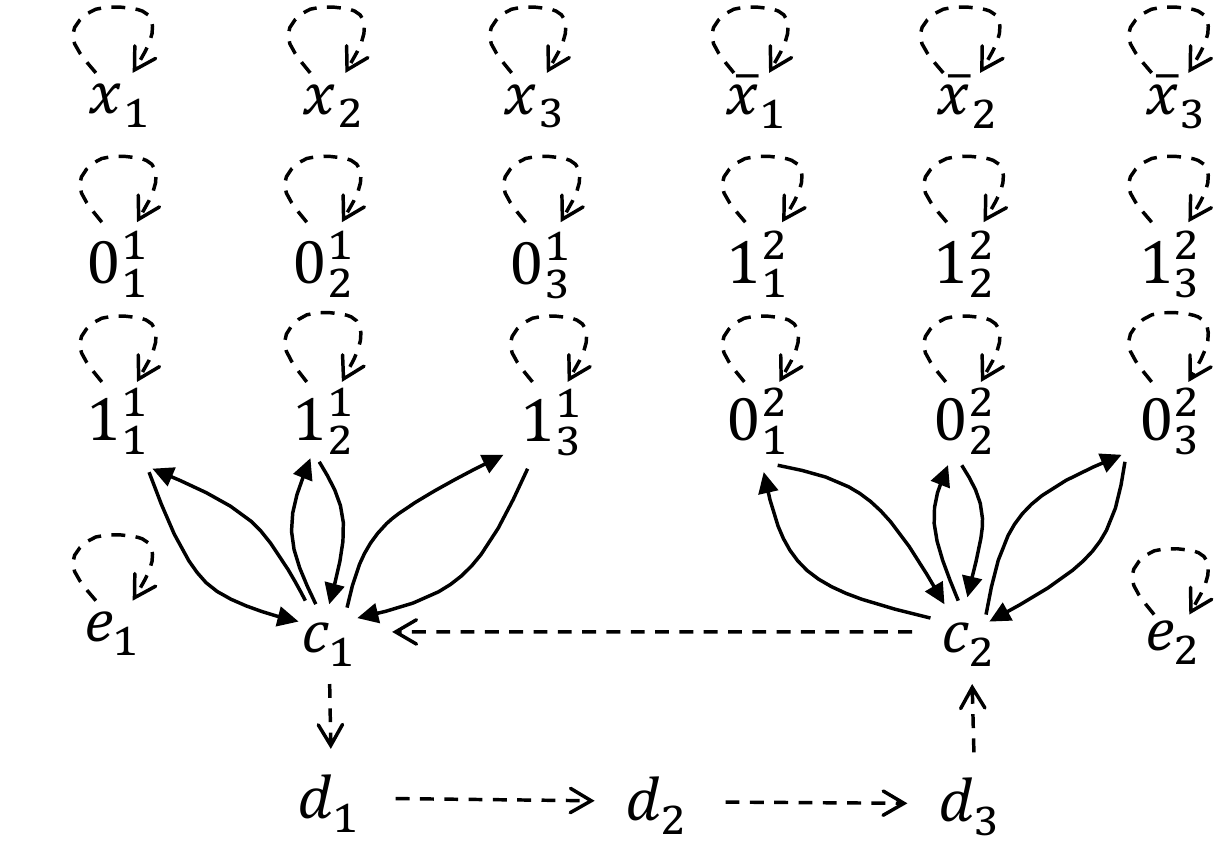}
}
\caption{\label{fig:g2} Graph $G_2$ representing agents' preferences and allocations in $A$ over type $2$ corresponding Example~\ref{eg:3sat}.}
\end{figure}

\vspace{2mm}
Formally, $G_2$ is constructed as:
\begin{itemize}[topsep=\parskip]
\item For $i \le m$, for $j \le k_i$, we add $dashed$ edges $(1_i^j,1_i^j),(0_i^j,0_i^j)$.
\item For $i \le m$, we add $dashed$ edges $(x_i,x_i),(\bar{x}_i,\bar{x}_i)$
\item We add $dashed$ edges $(c_n,c_{n-1}),\allowbreak ...,\allowbreak (c_2,c_1),\allowbreak (c_1,d_1),\allowbreak (d_1,d_2), ..., (d_{m-1},d_m), (d_m,c_n)$.
\item For every clause $c_j$, if $x_i$ is a literal in $c_j$, we add $solid$ edges $(c_j,1_i^j),(1_i^j,c_j)$ (similarly, we add $solid$ edges to and from $0_i^j$ is the literal $\bar{x}_i$ is involved in clause $c_j$).
\item We add $dashed$ edges $(e_1,e_1),(e_2,e_2)$.
\end{itemize}

\vspace{2mm}
$P$ is the profile of lexicographic preferences where agent has the importance order $1 \rhd 2$ over types and every agents' local preferences over items of type $1$ and type $2$ are represented by the graphs $G_1$ and $G_2$ respectively. This completes the construction.

\vspace{2mm}
We briefly illustrate the idea behind the rest of the proof using Example~\ref{eg:3sat}. Consider the satisfying valuation $\phi = (1,1,0)$ . It is easy to verify that there is a coalition of agents $S = \{1_1^1,1_1^2,x_1,1_2^1,1_2^2,x_2,0_3^1,0_3^2,y_2,e_1,c_1,c_2,e_2\}$. When agents in $S$ receive items according to the edges in the cycle involving all of them in $G_1$, they all receive an item of type $1$ that is weakly preferred over their item in $A$ according to their local preferences given by $G_1$. Agents $c_1$ and $c_2$ strictly improve over their allocations in $A$ according to their preferences in $G_2$ by exchanging items with agents $1_1^1$ and $0_3^2$ respectively. Now, consider the negative example of the valuation and $\psi = (1,1,1)$ and the coalition $\{1_1^1,1_1^2,x_1,1_2^1,1_2^2,x_2,1_3^1,1_3^2,x_3,e_1,c_1,c_2,e_2\}$. Note that agents $c_1$ and $c_2$ do not strictly improve over their allocations of type $1$ in $A$. They must improve over their items of type $2$. However, agent $c_2$ can only receive a strictly worse item of type $2$ compared to her allocation in $A$ from the agents in $S$.

At a high level, we use the construction in graph $G_1$ to ensure that the valuation of variables is consistent by ensuring that for any variable $x_i$, we may only exclusively have either agents $1_i^j$ or $0_i^j$ in any blocking coalition. We use $G_2$ to ensure that if there is a coalition that includes every agent $c_j$, then there is a satisfying valuation of the Boolean variables that satisfies all clauses. We now proceed with the formal proof.

\begin{myclaim} ($\Leftarrow$)
If $I$ is a Yes instance, then $J$ is a No instance.
\end{myclaim}

\begin{proof}
Let $\phi$ be a valuation that satisfies the formula $F$ in instance $I$. We will show the existence of a blocking coalition in $J$.

\vspace{2mm}
\noindent \textendash~ Consider the coalition $S$:
\begin{itemize}[topsep=\parskip]
\item $\forall i \le m$, if $\phi_i = 1$ (similarly $0$), then $\forall j \le k_i$, $1_i^j \in S$ and $x_i \in S$ (similarly $0_i^j$, $\bar{x}_i$).
\item $\forall j \le n$, $c_j \in S$.
\item $e_1,e_2 \in S$.
\end{itemize}

We will now construct an allocation $B$ on $S$ that weakly improves every agents' allocation over $A$ and is strictly improving for some agent. We will construct $B$ by identifying cycles in the graph and assigning every agent in a cycle the item of the agent she is pointing at.

\vspace{2mm}
\noindent \textendash~Consider the graph $G_1$ and the allocations corresponding to the cycle:
\begin{itemize}[topsep=\parskip]
\item Edges $(e_1,c_1),(c_1,c_2),...,(c_{n-1},c_n),(c_n,e_2)$.
\item If $\phi_1 = 1$, the edge $(e_2,1_1^1)$. (Similarly, $\phi_1=0, 0_1^1$)
\item For every $i \le m$, if $\phi_i = 1$, the edges $j \le k_i-1, (1_i^j,1_i^{j+1})$ and the edge $(1_i^{k_i},x_i)$. 
\item For every $i \le m-1$, if $\phi_i=1$ and $\phi_{i+1}=1$, the edge $(x_i,1_{i+1}^1)$.
\item If $\phi_m = 1$, the edge $(x_m,e_1)$.
\end{itemize}

\vspace{2mm}
\noindent \textendash~In $G_2$, consider the allocations corresponding to the cycles:
\begin{itemize}[topsep=\parskip]
\item For $j \le n$, let $i$ be the lowest number such that $x_i = \phi_i$ satisfies clause $c_j$, and let $\phi_i = 1$, then a cycle involving $solid$ edges $(c_j,1_i^j),(1_i^j,c_j)$. (similarly if $\phi_i = 0$, a cycle involving $c_j,0_i^j$).
\item For the remaining agents $1_i^j \in S$ (or $0_i^j \in S$), agents $x_i \in S$ (or $\bar{x}_i \in S$) and agents $e_1,e_2$, a self loop.
\end{itemize}

\vspace{2mm}
\noindent \textendash~In the allocation $B$ constructed by implementing the cycles we identified, every agent receives a weakly improving bundle over their allocated bundle in $A$ since they receive items of both types that are weakly preferred according to their local preferences. The agents $c_j$ and $1_i^j$ (or $0_i^j$) corresponding to a satisfying valuation of the variable $x_i$ involved in clause $c_j$, receive strictly improving bundles by receiving strictly preferred items of type $2$ in $B$ compared to $A$ according to their local preferences. Lastly, note that $B$ is an allocation on $S$ since none of the items allocated in $B$ were initially endowed to any agent outside $S$. This shows that $S$ is a blocking coalition and $B$ is a weakly improving allocation on $S$, and strictly improving for some agents in $S$. Therefore, $J$ is a No instance.
\end{proof}

\begin{myclaim}($\Rightarrow$)
If $J$ is a No instance, $I$ is a Yes instance.
\end{myclaim}

\begin{proof}
We are given that $J$ is a No instance. Let $S$ be a coalition that blocks $A$ w.r.t. preferences in $P$ and let $B$ be an allocation on $S$ such that \begin{enumerate*}[label=(\roman*)]\item $\forall a \in S, B(a) \succeq A(a)$, and \item $\exists a \in S, B(a) \succ A(a)$\end{enumerate*}.

\vspace{2mm}
\noindent $\bullet$~We start by proving some simple properties about the membership of agents in $S$ that are specific to our construction. Throughout, we will use the fact that $B$ must be an allocation where \begin{enumerate*}[label=(\alph*)]\item $\nexists a \in S$ such that $[B(a)]_1 \prec [A(a)]_1$, and \item $\forall a \in S$, if $[B(a)]_1 = [A(a)]_1$, then we must have that $[B(a)]_2 \succeq [A(a)]_2$\end{enumerate*}. Otherwise, if there is some agent $a^*$ for which the allocation $B$ fails to satisfy either of these conditions, then $B(a^*) \prec A(a^*)$.

\begin{mylemma}\label{lem:xij}
For any $i \le m$, any $j \le k_i-1$, \begin{enumerate*}[label=(\roman*)] \item if $1_i^j \in S$, then $[B(1_i^j)]_1 = [1_i^{j+1}]_1 = [A(1_i^j)]_1, 1_i^{j+1} \in S$, and \item if any $0_i^j \in S$, then $[B(0_i^j)]_1 = [0_i^{j+1}]_1 = [A(0_i^j)]_1, 0_i^{j+1} \in S$.\end{enumerate*}
\end{mylemma}
\begin{proof}
Let there be some agent $1_i^j \in S, i \le m, j \le k_i-1$, such that $[B(1_i^j)]_1 \neq [1_i^{j+1}]_1$. Then, according to her local preferences in $G_1$, $[B(1_i^j)]_1 \prec [A(1_i^j)]_1$ and by her importance order $1 \rhd 2$, $B(1_i^j) \prec A(1_i^j)$. This is a contradiction to our assumption that $B$ is an improving allocation for every agent in $S$.
\end{proof}

\begin{mylemma}\label{lem:xij1}
For any $i \le m$, any $2 \le j \le k_i$, \begin{enumerate*}[label=(\roman*)] \item if $1_i^j \in S$, then $[B(1_i^{j-1})]_1 = [1_i^j]_1$ and $1_i^{j-1} \in S$, and \item if $0_i^j \in S$, then $[B(0_i^{j-1})]_1 = [0_i^j]_1$ and $0_i^{j-1} \in S$.\end{enumerate*}
\end{mylemma}
\begin{proof}
By Lemma~\ref{lem:xij}, $1_i^j$ must be assigned $[1_i^{j+1}]_1$ (or if $j=k_i$, one of the items $[1_{i+1}^1]_1, [0_{i+1}^1]_1$, or $[e_2]_1$). Then, the item $[1_i^j]_1$ must be assigned to another agent in $S$. Suppose an agent $a \in S, x \neq 1_i^{j-1}$ is assigned $[1_i^j]_1$ in $B$. However, according to the preferences in $G_1$, for every agent $a$ other than $1_i^{j-1}$, the item $[1_i^j]_1$ is strictly worse than their allocated type-$1$ item in $A$. Then, $B(a) \prec A(a)$ according to $P$, a contradiction to our assumption that $B$ is weakly improving for every agent in $S$.
\end{proof}

\begin{mylemma}\label{lem:allxij}
For any $i \le m$, any $j^* \le k_i$, \begin{enumerate*}[label=(\roman*)]\item if $1_i^{j^*} \in S$, then for every $j \le k_i$, $1_i^j \in S, [B(1_i^j)]_1 = [A(1_i^j)]_1$ and $x_i \in S$, and \item if $0_i^{j^*} \in S$, then for every $j \le k_i$ $0_i^j \in S, [B(0_i^j)]_1 = [A(0_i^j)]_1$ and $\bar{x}_i \in S$.\end{enumerate*}.
\end{mylemma}
\begin{proof}
If $1_i^{j^*} \in S$, by Lemma~\ref{lem:xij}, for every $j = j^*+1,\dots,k_i$ agent $1_i^{j} \in S$ and $[B(1_i^j)]_1 = [A(1_i^j)]_1$, and by Lemma~\ref{lem:xij1}, for every $j = 1,\dots,j^*-1$, agent $1_i^j \in S$ and $[B(1_i^j)]_1 = [A(1_i^j)]_1$. If $1_i^{k_i} \in S$, $[B(1_i^{k_i})]_1 = [A(1_i^{k_i})]_1$ and $x_i \in S$.
\end{proof}

\begin{mylemma}\label{lem:xij2}
For any $i \le m$, any $j \le k_i$, if $1_i^j \in S$ and $[B(1_i^j)]_1 \preceq [A(1_i^j)]_1$, then either $[B(1_i^j)]_2 = [1_i^j]_2 = [A(1_i^j)]_2$ or $[B(1_i^j)]_2 = [c_i^j]_2$.
\end{mylemma}
\begin{proof}
Every other type-$2$ item is strictly worse than $[A(1_i^j)]_2$ according to agent $1_i^j$'s local preferences in $G_2$. If $[B(1_i^j)]_2 \notin \{[1_i^j]_2,[c_i^j]_2\}$, then $B(j) \prec A(j)$ according to her preferences in $G_1$ and $G_2$, a contradiction to our assumption that $B$ is improving for every agent in $S$.
\end{proof}

\begin{myclaim}\label{cl:cj}
Let any agent $c_{j^*} \in S$, then for every $j \le n$ \begin{enumerate*}[label=(\roman*)] \item $[B(c_j)]_1 = [A(c_j)]_1 = [c_{j+1}]_1, c_{j+1} \in S$, \item if $[B(c_j)]_2 \notin \{[a]_2: a\allowbreak~ \text{is an agent}~\allowbreak0_i^j~\allowbreak\text{or}~\allowbreak1_i^j~\allowbreak\text{corresponding\allowbreak~ to the valuation \allowbreak of a variable}~\allowbreak x_i~\allowbreak\text{that\allowbreak~ satisfies\allowbreak~ clause}~c_j\}$ then, $B(c_j) = [c_{j-1}]_2 = A(c_j)$, and \item $e_1,e_2 \in S, B(e_1) = A(e_1)$, and $B(e_2) = A(e_2)$.\end{enumerate*}.
\end{myclaim}
\begin{proof}
If $c_{j} \in S$, by the preferences in $G_1$, we must have that $[B(c_j)]_1 = [A(c_j)]_1 = [c_{j+1}]_1$ and $c_{j+1} \in S$ (and if $j=m$, $[B(c_m)]_1 = [A(c_m)]_1 = [e_2]_1$ and $e_2 \in S$) since every other item of type $1$ is strictly worse. Now, the item $[c_j]_1$ must be assigned to some agent in $S$. However, the only agent for whom the assignment of $[c_j]_1$ is weakly beneficial is the agent $c_{j-1}$ (or if $j=1$, the agent $e_1$). By induction we must have that if $c_{j^*} \in S$, every agent $c_j$ and the agents $e_1,e_2$ are in $S$ and their allocation of items of type $1$ is the same in $B$ and in $A$. This proves part (i).

To prove part (ii), let us examine $G_2$. Every other item of type $2$ is strictly worse than $[c_{j-1}]_2$ according to $c_j$'s local preferences in $G_2$. If she receives any other item of type $2$, then $B(c_j) \prec A(c_j)$, a contradiction to our assumption that $B$ is improving for every agent in $S$.

Finally, agents $e_1$ and $e_2$ must be assigned $[e_1]_2$ and $[e_2]_2$ respectively since they do not strictly improve on their item of type $1$ in $B$. This completes the proof of part (iii).
\end{proof}

\begin{mylemma}\label{lem:di}
For every $i \le m, d_i \notin S$.
\end{mylemma}

\begin{proof}
We provide a proof by contradiction that if $d_i \in S$, then $S$ must involve every agent in our construction and that there is no allocation $B$ that improves over $A$.

\begin{myclaim}\label{cl:di1}
If $d_i \in S$, $[B(d_i)]_1 = [0_i^1]_1 = [A(d_i)]_1$ and $0_i^1 \in S$.
\end{myclaim}
\begin{proof}
Let us examine $G_1$. If $d_i \in S$, we must have that $[B(d_i)]_1 = [0_i^1]_1 = [A(d_i)]_1, 0_i^1 \in S$, since every other item is strictly worse according to her local preferences over type $1$. 
\end{proof}

\begin{myclaim}\label{cl:0ijdi}
If $d_i \in S$, then for every $0_i^j$, $0_i^j \in S, [B(0_i^j)]_1 = [A(0_i^j)]_1$ and $\bar{x}_i \in S, [B(0_i^j)]_1 = [A(0_i^j)]_1 = [d_i]_1$.
\end{myclaim}
\begin{proof}
If $d_i \in S$, the claim follows from Claim~\ref{cl:di1}, Lemma~\ref{lem:allxij}, and the fact that $[\bar{x}_i]$ must be assigned to $d_i$ if $\bar{x}_i \in S$ according to the preferences of every other agent in $G_1$.
\end{proof}

\begin{myclaim}\label{cl:alldi}
If a $d_{i^*} \in S$, then \begin{enumerate*}[label=(\roman*)]\item for every $i \le n$, $d_i \in S$ and $B(d_i) = A(d_i)$, and \item for every $j \le m$, agent $c_j \in S$ and $B(c_j) = A(c_j)$.\end{enumerate*}
\end{myclaim}
\begin{proof}
Let us examine $G_2$. For every $i = 1,...,m-1$, if $d_i \in S$, we must have that $[B(d_i)]_2 = [d_{i+1}]_2 = [A(d_i)]_2, d_{i+1} \in S$. Suppose for some $i \le m-1$, $[B(d_i)]_2 \neq [d_{i+1}]_2$, then $[B(d_i)]_2 \prec [A(d_i)]_2$. For $i=m$, we must have that $[B(d_m)]_2 = [c_{n}]_2 = [A(d_m)]_2$ and that $c_n \in S$. Together with Claim~\ref{cl:di1} this proves part (i) of the claim. 

Part (ii) of the claim is proved by Claim~\ref{cl:cj} and the following argument. First we show that $c_n$ cannot enter into an exchange with an agent $1_i^n$ (or $0_i^n$) corresponding to a valuation of $x_i$ that satisfies $c_n$. 
If $c_n$ receives $[1_i^n]_2$ (or $[0_i^n]_2$), then by Lemma~\ref{lem:allxij}, Claim~\ref{cl:0ijdi} and Lemma~\ref{lem:xij2}, the agent $1_i^n$ (or $0_i^n$) must be assigned $[c_n]_2$ since she cannot be assigned a strictly improving item of type $1$ in $B$. 

Let us consider agents $c_j$ in the order $j=n-1,\dots,1$. We know that $c_j \in S$ and the item $[c_{j+1}]_2$ must already be assigned to another agent. Then, $c_j$ must exchange items with an agent $1_i^j$ (or $0_i^j$) corresponding to a valuation of the variable $x_i$ that satisfies $c_j$.
\end{proof}

\vspace{2mm}
\noindent \textendash~If $d_i \in S$, then by Claim~\ref{cl:alldi}, Claim~\ref{cl:di1}, Claim~\ref{cl:0ijdi}, Lemma~\ref{lem:allxij} and Lemma~\ref{lem:xij2}, $S$ must be the set of all agents, and that the weakly improving allocation $B$ is exactly the allocation $A$. This is a contradiction to our assumption that $B$ is weakly improving for every agent but also strictly improving for some agent.
\end{proof}

\vspace{2mm}
\noindent $\bullet$~The final key step is to establish that any blocking coalition $S$ involves all of the agents $c_j$, and for every $i \le m$, either all of $x_i^j$ or all of $\bar{x}_i^j$.

\begin{mylemma}\label{lm:satblock}
Every weakly blocking coalition $S$ consists of all the agents $c_j, j \le n$, and $\forall i \le n$, either \begin{enumerate*}[label=(\roman*)]\item all agents $x_i^j$, or \item all agents $\bar{x}_i^j$.\end{enumerate*}
\end{mylemma}

\begin{proof}
We start by showing that every blocking coalition $S$ must include at least one of the agents $1_i^j,0_i^j,x_i,\bar{x}_i$, or $c_j$ since these are the only agents with $solid$ incoming or outgoing edges and every allocation $B$ that strictly improves over $A$ must involve an assignment corresponding with one of these edges. 

\begin{myclaim}
If any $c_{j^*} \in S$ or any $0_{i^*}^{j^*} \in S$ or any $1_{i^*}^{j^*} \in S$, then $e_1,e_2 \in S$, every $c_j \in S$ and for every $i \le m$, either every agent $1_i^j \in S$ or every $0_i^j \in S$
\end{myclaim} 

\vspace{2mm}
\noindent \textendash~Suppose $c_{j^*} \in S$. Let us examine her preferences in $G_1$. By Claim~\ref{cl:cj}, for $j \le n-1$, we must have that $[B(c_j)]_1 = [c_{j+1}]_1 = [A(c_j)]_1$, and $c_{j+1} \in S$. By induction, $c_n \in S$.

\vspace{2mm}
\noindent \textendash~If $c_n \in S$, we must have that $[B(c_n)]_1 = [e_2]_1 = [A(c_n)]_1, e_2 \in S$ since every other item of type $1$ is worse according to $G_1$.

\vspace{2mm}
\noindent \textendash~If $e_2 \in S$ then either $[B(e_2)]_1 = [1_1^1]_1$, and $1_1^1 \in S$ or $[B(e_2)]_1 = [0_1^1]_1$, and $0_1^1 \in S$ since these are the only two items that are weakly preferred over $[A(e_2)]_1$.

\vspace{2mm}
\noindent \textendash~For any $i \le m$, if $1_i^{j^*} \in S$, then by Lemma~\ref{lem:allxij}, for every $j \le k_i$, $1_i^j \in S$ and $x_i \in S$. (Similarly, if for any $i \le m$, any $0_i^{j^*} \in S$, then for every $j \le k_i$, agent $0_i^j \in S$).
 
\vspace{2mm}
\noindent \textendash~For $i \le m-1$, if $x_i \in S$, then either $1_{i+1}^1 \in S$ or $0_{i+1}^1 \in S$ since in the case that $x_i$ is not assigned one of the items $[1_{i+1}^1]_1$ or $[0_{i+1}^1]_1$ in $B$, any other item of type $1$ assigned to her must be worse than $[A(1_i^{k_i})]_1$ according to the preferences in $G_1$, a contradiction to our assumption that $B$ is weakly improving for every agent in $S$. Similarly, if $\bar{x}_i \in S$, then either $1_{i+1}^1 \in S$ or $0_{i+1}^1 \in S$.

\vspace{2mm}
\noindent \textendash~ Finally, if $x_m \in S$ (or $\bar{x}_m$), then we must have that $e_1 \in S$, since $[e_1]_1$ is the only item that is weakly better than her allocation in $A$ according to $G_1$.

\vspace{2mm}
\noindent \textendash~If $e_1 \in S$, we must have that $c_1 \in S$ since $[c_1]_1$ is the only item that is weakly preferred over $[A(e_1)]_1$ according to $G_1$.

\vspace{2mm}
\noindent \textendash~Thus, if any agent $1_i^j,0_i^j,x_i$ or $\bar{x}_i$ is in $S$, then we must have $c_1 \in S$. We have already established that the existence of any $c_j \in S$ implies that for every $i \le m$, either all $1_i^j \in S$ or all $0_i^j \in S$. 

\vspace{2mm}
\noindent \textendash~Finally, we will show that there is no blocking coalition $S$ such that for some $i \le m$, any $j \le k_i, j' \le k_i$ the agents $1_i^j \in S$ and $0_i^{j'} \in S$. Suppose for the sake of contradiction that such a coalition exists. we must have by Lemma~\ref{lem:xij}, Lemma~\ref{lem:xij1} and Lemma~\ref{lem:di} that both $x_m,\bar{x}_m \in S$. However, by Lemma~\ref{lem:di}, we know that $d_m \notin S$ which implies that both $x_m$ and $\bar{x}_m$ must get $[e_1]_1$ in $B$ since every other item is strictly worse than their allocations in $A$ according to preferences in $G_1$. This is impossible since items are unique and indivisible. 
\end{proof}

\begin{mylemma}\label{lem:xij1noimprovement}
If $1_i^j \in S$, $[B(1_i^j)]_1 = [A(1_i^j)]_1 = [1_i^{j+1}]_1$ (if $j = k_i, [x_i]_1$).
\end{mylemma}
\begin{proof}
Every other item of type $1$ is strictly worse for agent $1_i^j$ according to $G_1$. Similarly, if $0_i^j \in S$, $[B(0_i^j)]_1 = [A(0_i^j)]_1 = [0_i^{j+1}]_1$ (if $j = k_i, [\bar{x}_i]_1$).
\end{proof}

\vspace{2mm}
\noindent $\bullet$~We will now prove that: {\em If there is a coalition $S$ that blocks $A$, there is a satisfying assignment to $F$.}

We now describe the construction of a valuation $\phi$ that satisfies the formula $F$ in the instance $I$ by examining $G_2$ and argue that such a valuation must exist. We start by noting that since agents $c_j$ and agents $1_i^j,0_i^j$ cannot strictly improve over their allocations of type $1$ according to preferences in $G_1$, they must weakly improve their allocation of items of type $2$. 

\vspace{2mm}
\noindent \textendash~Let us examine $G_2$. We know by Lemma~\ref{lem:di} that none of the agents $d_i$ are in $S$. We have also shown that $c_1 \in S$. Agent $c_1$ must receive an item $[1_i^1]_2$ (or $[0_i^1]$) where the corresponding valuation of the variable $x_i$ satisfies $c_1$ and the agent $1_i^1$ (or $0_i^1$) is in $S$. Otherwise, every other item of type $2$ is strictly worse than her allocated item of type $2$ in $A$ and together with the fact that $c_1$ does not receive a strictly better item of type $1$ in $B$, we have that $B(c_1) \prec A(c_1)$, a contradiction to our assumption that $B$ is weakly improving for every agent in $S$. Further, agent $1_i^1$ (or $0_i^1$) must receive the item $[c_1]_2$ since every other item is strictly worse. Set $\phi_i$ to be $1$ if $c_1$ gets $[1_i^1]_2$ and to $0$ if $c_1$ gets $[0_i^1]_2$ in $B$.

\vspace{2mm}
\noindent \textendash~Similarly, agent $c_2$ cannot assigned $[c_1]_2$ since it must be assigned to some agent $1_i^1$ (or $0_i^1$) and agent $c_2$ must exchange items with agent an agent $1_i^2$ (or $0_i^2$) that corresponds to a satisfying valuation of a variable $x_i$ that satisfies clause $c_2$. By an inductive argument, we must have that for every $j \le m$, agent $c_j$ receives the item of an agent $1_i^j$ or $0_i^j$ that is in $S$ which corresponds to a satisfying valuation of variables $x_i$ that satisfies the clause $c_j$. For each clause, set $\phi_i$ to be $1$ if $c_j$ gets $[1_i^j]_2$ and to $0$ if $c_j$ gets $[0_i^j]_2$ in $B$. Note that each clause $c_j$ is satisfied by a variable whose value is consistent with the final value according to $\phi_i$ since the value of $\phi_i$ can never change once set by Lemma~\ref{lm:satblock}. The agents involved correspond to a consistent valuation of the variables and all of the clauses must be satisfied simultaneously. 
\end{proof}

This completes the proof. Given an instance $I$ of 3-SAT, we can construct in polynomial time, a corresponding instance $J$ of InStrictCore, such that $I$ is a Yes instance iff $J$ is a No instance.
\end{proof}

\begin{mydef}[StrictCoreNonEmpty]
Given a multi-type housing market $M$, agents' preferences $P$, and an allocation $A$, we are asked whether the strict core of $M$ is non-empty.
\end{mydef}

\begin{mythm}
StrictCoreNonEmpty is NP-hard.
\end{mythm}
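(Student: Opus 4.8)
The plan is to reduce from 3-SAT, recycling and extending the gadget machinery built for InStrictCore. A preliminary remark fixes the polarity and the preference class: by Theorem~\ref{thm:mttc}, $\mttc$ always outputs a strict-core allocation when every agent's preferences are lexicographic, so StrictCoreNonEmpty is the constant ``yes'' on lexicographic profiles and the reduction must use genuinely non-lexicographic preferences. I would therefore let agents' preferences be arbitrary acyclic CP-nets (partial orders over bundles), exploiting the fact that incomparable bundles count as weakly preferred, so that blocking coalitions are comparatively easy to assemble.

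Given a 3-CNF formula $F$ over variables $x_1,\dots,x_m$ and clauses $c_1,\dots,c_n$, I would assemble a market $M$ out of three coupled pieces. First, a \emph{variable block}: essentially the type-$1$ chain structure already used in $G_1$, whose effect is that in any allocation a variable's occurrence-agents can only be ``activated'' uniformly (all $1_i^j$ or all $0_i^j$, never a mix) without creating an internal blocking coalition, so an allocation of this block encodes a consistent truth assignment or else is blocked. Second, a \emph{clause block}: each clause agent $c_j$, via type-$2$ swaps as in $G_2$, is part of no blocking coalition exactly when the current assignment makes one of its literals true. Third, an \emph{emptiness gadget} $E$ --- a fixed constant-size sub-market whose strict core is empty in isolation, as exhibited by~\cite{Konishi01:On-the-ShapleyScarf} --- wired to the rest so that the only trade that can stabilise $E$ requires an item that is freed for trading precisely when every clause agent is simultaneously content.

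The two directions then read: if $\phi\models F$, the allocation that puts the variable block in the configuration of $\phi$, hands each clause agent its satisfying-literal item of type $2$, and routes the freed item into $E$ is individually rational and Pareto optimal, and an analysis in the spirit of the $(\Rightarrow)$ direction of the InStrictCore proof --- pinning down the possible shape of a putative blocking coalition and deriving a contradiction --- shows it admits no blocking coalition, so the strict core is non-empty. Conversely, if $F$ is unsatisfiable, then for every allocation at least one of three obstructions occurs: the variable block is inconsistent, some clause agent is discontent, or $E$ never receives the item it needs and its internal blocking coalition survives; an explicit construction of a blocking coalition (as in the $(\Leftarrow)$ direction) then shows every allocation is blocked, so the strict core is empty. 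Everything is polynomial-time constructible.

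The main obstacle is the ``for all allocations'' quantifier that every core-nonemptiness hardness proof must confront: when $F$ is unsatisfiable I must rule out every global reallocation --- including ones that satisfy clauses only partially, or that try to feed $E$ through an unintended route --- as a stable outcome; and when $F$ is satisfiable I must verify that the intended allocation is not accidentally blocked by an interaction between two clause gadgets, or between a clause gadget and $E$. Choosing exactly which CP-net dependencies to place so that the escape trade for $E$ is available if and only if all clauses are simultaneously satisfied, with no cheaper substitute, is the delicate part of the construction.
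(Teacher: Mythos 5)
Your skeleton --- reduce from 3-SAT with variable-consistency chains, clause gadgets, and a Konishi-style empty-core gadget that is defused exactly when the formula is satisfied --- matches the paper's, which embeds one copy of Example 2.2 of \cite{Konishi01:On-the-ShapleyScarf} per clause as the triple $c_j^1,c_j^2,c_j^3$ and arranges that the only escape from that triple's internal blocking coalition is for $c_j^1$ to obtain the type-$1$ item of a literal agent $1_i^j$ or $0_i^j$ satisfying $c_j$, while type-$2$ chains through the $x_i$ agents force polarity consistency. But two of your choices are genuine gaps. First, you propose to work with partial-order CP-net preferences and to count incomparable bundles as weakly preferred. The problem is posed over linear-order preferences (the paper's construction writes out explicit linear orders ending in ``$\succ$ others''), and your incomparability convention changes the meaning of ``weakly blocked'': it makes blocking strictly easier, so the core you would analyze is not the strict core of the stated model, and hardness under that relaxed semantics does not transfer. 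Your observation that the preferences must be non-lexicographic is correct, but non-lexicographic \emph{linear orders} suffice and are what is required.

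Second, and more seriously, you flag but do not solve the central difficulty of the satisfiable direction. You propose to take ``the'' allocation corresponding to a satisfying valuation $\phi$; but each clause agent $c_j^1$ necessarily has strict preferences among the three literal agents it could trade with, and different satisfying valuations activate different literal agents, so an allocation built from an arbitrary $\phi$ can be weakly blocked by the coalition realizing a different satisfying valuation (or a different satisfying literal of the same clause) that some $c_j^1$ strictly prefers. The paper's resolution is exactly the missing idea: fix a total order $W$ on literals and the induced lexicographic order $Q$ on valuations, build the allocation from the $Q$-maximal satisfying valuation, and give each $c_j^1$ the item of the $W$-maximal satisfying literal of $c_j$; any putative blocking coalition then encodes a satisfying valuation ranked above the maximum, a contradiction. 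Without some such canonical choice the forward direction fails. Relatedly, your single global emptiness gadget $E$, stabilized by an item ``freed precisely when every clause agent is content,'' is underspecified --- the paper avoids needing any global synchronization by making each clause triple its own locally defusable emptiness gadget.
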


\begin{proof}
We will show a reduction from 3-SAT. An instance $I$ of 3-SAT is given by a formula $F$ in 3-CNF consisting of clauses $c_1,\dots,c_n$ involving Boolean variables $x_1,\dots,x_m$, and we are asked whether there is a valuation of the variables that satisfies $F$. Each clause $c_j$ is a disjunction of exactly $3$ literals $c_j = l_{j_1} \vee l_{j_2} \vee l_{j_3}$. Each literal $l_i$ is either the variable ({\em positive} literal) $x_i$ or its negation ({\em negative} literal) $\bar{x}_i$. Each of $l_{j_1},l_{j_2},l_{j_3}$ corresponds to the {\em positive} or {\em negative} literal of one of the variables $x_1,\dots,x_m$. 

\vspace{2mm}
\noindent $\bullet$~We define the following ordering $W$ on positive and negative literals of the Boolean variables where positive literals are ranked over negative literals and literals of variables with lower index are ranked over literals of higher index. Let $l_i,l_{i'}$ be any two literals. Then, $l_i \succ_W l_{i'}$ if either:
\begin{enumerate*}[label=(\arabic*)]
\item $i \le i'$ and if either \begin{enumerate*}[label=(\roman*)]\item$l_i = x_i$, or \item $l_i = \bar{x_i}$ and $l_{i'} = \bar{x}_{i'}$.\end{enumerate*}
\item $i \ge i'$, $l_i = x_i$ and $l_{i'} = \bar{x}_i'$.
\end{enumerate*}

\begin{myeg}
The order on the literals in Example~\ref{eg:3sat} is $x_1 \succ x_2 \succ x_3 \succ \bar{x}_1 \succ \bar{x}_2 \succ \bar{x}_3$.
\end{myeg}

\vspace{2mm}
\noindent $\bullet$~For every clause $c_j = l_j1 \vee l_j2 \vee l_j3$, the literals are indexed so that $l_j1 \succ_W l_j2 \succ_W l_j3$. 

\vspace{2mm}
We construct an instance $J$ of CNE as follows:

\vspace{2mm}
{\bf Agents:}
\begin{itemize}
\item For every clause $c_j$, add the agents $c_j^1,c_j^2,c_j^3$.
\item For every variable $x_i$, add agents $x_i$.
\item For every variable $x_i$, add agents $1_i^j, 0_i^j$ for every clause $c_j$ that involves variable $x_i$.
\end{itemize}

\vspace{2mm}
{\bf Preferences:}
\begin{itemize}
\item Agents $c_j^1$: $(1_{j_1}^j,c_{j+1}^1) \succ (1_{j_2}^j,c_{j+1}^1) \succ (1_{j_3}^j,c_{j+1}^1) \succ (c_j^1,c_j^3) \succ (c_j^3,c_j^3) \succ (c_j^1,c_j^2) \succ (c_j^1,c_j^1) \succ $ others. If $j = n$, replace $c_{j+1}$ with $c_1$. For $k \in \{1,2,3\}$, if $l_{j_k} = \bar{x}_{j_k}$, replace $1_{j_k}^j$ with $0_{j_k}^j$.
\item Agents $c_j^2$: $(c_j^2,c_j^3) \succ (c_j^2,c_j^1) \succ (c_j^3,c_j^3) \succ (c_j^3,c_j^1) \succ (c_j^2,c_j^2) \succ $ others.
\item Agents $c_j^3$: $(c_j^2,c_j^1) \succ (c_j^2,c_j^2) \succ (c_j^3,c_j^1) \succ (c_j^1,c_j^1) \succ (c_j^3,c_j^2) \succ (c_j^1,c_j^2) \succ (c_j^2,c_j^3) \succ (c_j^3,c_j^3) \succ (c_j^1,c_j^3) \succ $ others.
\item Agents $1_i^j$:
\begin{itemize}
\item If $l_i=x_i \in c_j, j \le k_i$, $(c_j^1,1_i^{j+1}) \succ (1_i^j,1_i^j) \succ $ others. If $j=k_i$, replace $1_i^{j+1}$ with $x_{(i+1) \mod m}$.
\item If $l_i=x_i \notin c_j, j \le k_i$, $(1_i^j,1_i^{j+1}) \succ (1_i^j,1_i^j) \succ $ others. If $j=k_i$, replace $1_i^{j+1}$ with $x_{(i+1) \mod m}$. 
\end{itemize}
\item Agents $0_i^j$:
\begin{itemize}
\item If $l_i=\bar{x}_i \in c_j, j \le k_i$, $(c_j^1,0_i^{j+1}) \succ (0_i^j,0_i^j) \succ $ others. If $j=k_i$, replace $0_i^{j+1}$ with $0_{(i+1) \mod m}$.
\item If $l_i=\bar{x}_i \notin c_j, j \le k_i$, $(0_i^j,0_i^{j+1}) \succ (0_i^j,0_i^j) \succ $ others. If $j=k_i$, replace $0_i^{j+1}$ with $0_{(i+1) \mod m}$. 
\end{itemize}
\item Agents $x_i$: $(x_i,1_i^1) \succ (x_i,0_i^1) \succ (x_i,x_i) \succ $ others.
\end{itemize}

\vspace{2mm}
{\bf Initial endowments:} Each agent $x$ is initially endowed with a bundle $O(x) = (x,x)$.\\

High level idea: 
\begin{itemize}
\item The preferences are structured so that type-$1$ items are used to track satisfaction of individual clauses, while type-$2$ are used to ensure a consistent valuation of the Boolean variables and satisfaction of all clauses.
\item Borrowing from the Example 2.2 in~\cite{Konishi01:On-the-ShapleyScarf}, if  $c_j^1,c_j^2,c_j^3$ only receive each others' items in an allocation, there is a blocking coalition. For an allocation to be individually rational and stable, agent $c_j^1$ must receive an item of type $1$ from an agent $1_i^j$ or $0_i^j$.
\end{itemize}

\begin{myeg}\label{eg:Konishi}[Example 2.2 in~\cite{Konishi01:On-the-ShapleyScarf}]
Consider the multi-type housing market with agents $N=\{1,2,3\}$, $p=2$ types and the following preferences.
\begin{itemize}[leftmargin=0pt]
\item[] Agent 1: $(1,3) \succ_1 (3,3) \succ_1 (1,2) \succ_1 $ others.
\item[] Agent 2: $(2,3) \succ_2 (2,1) \succ_2 (3,3) \succ_2 (3,1) \succ_2 (2,2) \succ_2 $ others.
\item[] Agent 3: $(2,1) \succ_3 (2,2) \succ_3 (3,1) \succ_3 (1,1) \succ_3 (3,2) \succ_3 (1,2) \succ_3 (2,3) \succ_3 (3,3) \succ_3 (1,3)$ others.
\end{itemize}

The strict core is empty for the market above.
\end{myeg}

\begin{myclaim}($\Rightarrow$)
If $I$ is a Yes instance, $J$ is a Yes instance.
\end{myclaim}
\begin{proof}
We start by define a lexicographic order $Q$ on valuations. Given two valuations $\psi,\omega \in \Phi$, let $i$ be the lowest value such that $\psi_i$ and $\omega_i$ differ. We rank $\psi$ over $\omega$ if $\psi_i=1$ ($\omega_i$ must be $0_i$). 

\begin{myeg}
In the Example~\ref{eg:3sat}, the valuation $(1,1,0)$ is ranked over $(1,0,1)$.
\end{myeg}

\vspace{2mm}
\noindent \textendash~Let $\Phi$ be the set of satisfying valuations to $F$. Let $\phi$ be the top ranked satisfying valuation according to $Q$. We will construct an allocation $A$ w.r.t. the valuations in $\phi$ that is a core allocation.\\

{\bf Allocation $A$:}
\begin{itemize}
\item $A(c_j^1) = (1_i^j,c_{j+1}^1)$ if $l_i$ is the highest ranked literal according to $W$ that satisfies $c_j$ when $x_i = \phi_i$.  If $l_i = \bar{x}_i$, replace $1_i^j$ with $0_i^j$.
\item $A(c_j^2) = (c_j^2,c_j^3)$.
\item $A(c_j^3) = (c_j^3,c_j^2)$.
\item $A(1_i^j) = $ 
\begin{itemize}
\item $(c_j^1,1_i^{j+1})$, if $l_i=x_i$ is the highest ranked literal according to $W$ that satisfies $c_j$ when $x_i=\phi_i$.
\item $(1_i^j,1_i^{j+1})$, otherwise.
\end{itemize} 
If $j=k_i$, replace $1_i^{j+1}$ with $x_{(i+1) mod m}$
\item $A(0_i^j) = $ 
\begin{itemize}
\item $(c_j^1,0_i^{j+1})$, if $l_i=\bar{x}_i$ is the highest ranked literal according to $W$ that satisfies $c_j$ when $x_i=\phi_i$.
\item $(0_i^j,0_i^{j+1})$, otherwise.
\end{itemize} 
If $j=k_i$, replace $0_i^{j+1}$ with $x_{(i+1) mod m}$
\item $A(x_i) = (x_i,x_i^1)$.
\end{itemize}

\vspace{2mm}
\noindent \textendash~Suppose, for the sake of contradiction that $S$ blocks $A$. Let $B$ be an improving allocation on $S$. We will show that $B$ corresponds to a satisfying assignment to $F$ that is ranked lower than $\phi$, and $B$ must be strictly worse for some agent in $S$.

\vspace{2mm}
\noindent \textendash~We start by showing that $S$ must involve at least one of the agents $c_j^1, 1_i^j$, or $0_i^j$. 

If either of $c_j^2$ or $c_j^3$ are in $S$ for some $j \le n$, then we must have that all the agents $c_j^1,c_j^2,c_j^3$ are in $S$. If $c_j^2 \in S$, we must have that $c_j^3 \in S$, otherwise, $c_j^2$ gets a strictly worse allocation in $B$. If $c_j^3 \in S$, we must have at least one of the agents $c_j^2$ or $c_j^1 \in S$, otherwise she gets a strictly worse bundle in $B$ than in $A$. If only $c_j^3,c_j^2 \in S$, and $c_j^1 \notin S$, neither agent can strictly improve over their allocation in $A$. Therefore, we must have $c_j^1 \in S$.

\vspace{2mm}
\noindent \textendash~Let $[A(c_j^1)]_1 = [O(1_i^j)]_1$ (or $0_i^j$) and $[B(c_j^1)]_1 = [O(1_{i^*}^j)]_1$ (or $0_{i^*}^j$). Then agent $1_{i^*}^j$ (or $0_{i^*}^j$) corresponds to a literal $l_{i^*}$ in clause $c_j$ that is ranked weakly on top of $l_i$ according to $W$. Otherwise, $c_j^1$ is strictly worse off in $B$ than in $A$ according to the preferences in the construction, a contradiction to our assumption that $B$ is improving for every agent in $S$.

\vspace{2mm}
\noindent \textendash~If any one of $c_j^1 \in S$, then every agent $c_j^1 \in S$. Every agent $c_j^1$ must be assigned $[B(c_j^1)]_2 = [O(c_{j+1}^1)]_2$, otherwise she is strictly worse off in $B$. This also implies that $c_{j+1}^1 \in S$ (if $j=k$, replace $j+1$ with $1$).

\vspace{2mm}
\noindent \textendash~If $1_i^j \in S$ (or $0_i^j$), then $1_i^{j+1} \in S$ ($0_i^{j+1}$) and $1_i^{j-1} \in S$ ($0_i^{j-1}$). For agent $1_i^j$ to weakly improve her allocation in $B$ over her allocation in $A$, we must have $[B(1_i^j)]_2 = [O(1_i^{j+1})]_2$ (if $j = k_i$, replace $1_i^{j+1}$ with $1_{i+1}$). Otherwise, $1_i^j$ is strictly worse off in $B$. 

Some agent in $S$ must receive the item $[O(1_i^j)]$. Let agent $a \in S, a \neq 1_i^{j-1}$ be assigned item $[O(1_i^j)]$ in $B$, then by her reported preferences, agent $a$ is strictly worse off with her allocation in $B$ compared to her allocation in $A$, a contradiction to our assumption that $B$ is improving for every agent in $S$.

\vspace{2mm}
\noindent \textendash~For some pair $j,j'$, we cannot have both $1_i^j,0_i^{j'} \in S$. Suppose we have both $1_i^j$ and $0_i^{j'}$ in $S$, then we must have both $1_i^{k_i}$ and $0_i^{k_i} \in S$. However both of the agents $1_i^{k_i}$ and $0_i^{k_i}$ must be assigned the single item $[O(x_{i+1})]_2$ in $B$ to weakly improve over their allocation in $A$. Therefore one of the two agents must be strictly worse off in $B$, a contradiction to our assumption that allocations in $B$ are weakly preferred over $A$ by every agent in $S$.

\vspace{2mm}
\noindent \textendash~We have shown that $S$ must include all the agents $c_j^1$ corresponding to clauses $c_j$, for every $i$ either all of the agents $1_i^j$ or all of the agents $0_i^j$ and that for every $c_j$, an agent $1_i^j$ or $0_i^j$ corresponding to a valuation of a variable $x_i$ that satisfies $c_j$.

\vspace{2mm}
\noindent \textendash~Let $\psi$ be the valuation that is constructed as follows: for every clause $c_j$, if $c_j^1$ receives a type-$1$ item endowed to agent $1_i^j$, we set $\psi_i=1$. Note that every $c_j$ receives such an item and this corresponds to a valuation of the variable $x_i$ that satisfies $c_j$, and that once a variable takes a value, it is never changed again i.e. the value of $\psi_i$ is always consistent with a value that satisfies all clauses $c_j$ considered previously. Therefore, the values of the variables in $\psi$ constitutes a satisfying valuation. 

\vspace{2mm}
\noindent \textendash~Now, consider the allocation $B$ where the pairs of agents $c_j^1,1_i^j$ (or $0_i^j$) exchanging items of type $1$ is the same as in the allocation $A$. Then $B$ is not a strictly improving allocation over $A$ for any agent in $S$, a contradiction to our assumption on the allocation $B$. 

Let $i^*$ be the smallest value such that $\phi_{i^*} \neq \psi_{i^*}$. If $\psi_{i^*} = 1$, then $\psi$ is ranked over $\phi$ by $Q$, a contradiction to our assumption that $\phi$ is the highest ranked satisfying valuation. 

If $\psi_{i^*} = 0$, (then $\phi_{i^*} = 1$), then consider the agent $c_j$ who was assigned the type-$1$ item of $1_{i^*}^j$ in $A$. By the construction of $A$, $c_j$ must either receive the type-$1$ item of an agent $1_{\hat{i}}^j$ in $B$ where $\hat{i} > i^*$, or of an agent $0_{\hat{i}}^j$ in $B$. In either case, $c_j$ is strictly worse off in $B$ compared to $A$ since it corresponds to a lower ranked literal according to $W$. This is a contradiction to our assumption that every agent is weakly better off with their allocations in $B$ compared to $A$.
\end{proof}

\begin{myclaim}($\Leftarrow$)
If $J$ is a Yes instance, $I$ is a Yes instance.
\end{myclaim}
\begin{proof}
Suppose $A$ is a core allocation.

\vspace{2mm}
\noindent \textendash~Every agent $c_j^1$ must receive an item of type $1$ from an agent $1_i^j$ (or $0_i^j$) where the corresponding literal of $x_i$ satisfies clause $c_j$ i.e. we must have that $[A(c_j^1)]_1 = [O(1_i^j)]_1$ (similarly $[O(0_i^j)]_1$). 

For the sake of contradiction let $c_j^1$ not receive an item from an agent $1_i^j$ (or $0_i^j$). Then there are two cases: either $c_j^1,c_j^2,c_j^3$ are assigned each others' items or one of them is assigned the initial endowment of some other agent. If $c_j^1,c_j^2,c_j^3$ are only assigned each others' initial endowments, then by our construction of the preferences of agents $c_j^1,c_j^2,c_j^3$  by adapting Example~\ref{eg:Konishi} there is always a blocking coalition to every such allocation, a contradiction to our assumption that $A$ is in the core. If $c_j^2$ or $c_j^3$ is assigned the initial endowment of an agent other than $c_j^1,c_j^2$ or $c_j^3$, then the resulting allocation is not individually rational by the construction of agents' preferences, a contradiction. If $c_j^1$ is assigned the initial endowment of an agent other than an agent $1_i^j$ or $0_i^j$ corresponding to a satisfying assignment for the clause $c_j$, then again the resulting allocation is not individually rational.

\vspace{2mm}
\noindent \textendash~If $1_i^j$ (or $0_i^j$) exchanges her item type-$1$ item with an agent $c_j^1$, then none of the agents $0_i^{j'}$ (similarly, $1_i^{j'}$) can exchange items with some other agent $c_{j'}^1$.

For the sake of contradiction, let $1_i^j$ and $0_i^{j'}$ exchange their items of type $1$ with agents $c_j^1$ and $c_{j'}^1$ i.e they get $[A(1_i^{j})]_1 = [O(c_{j}^1)]_1$ and $[A(0_i^{j'})]_1 = [O(c_{j'}^1)]_1$ respectively. Then agents $1_i^{j}$ and $0_i^{j'}$ must get the assignments $[A(1_i^j)]_2 = [O(1^{j+1})]_2$ and $[A(0_i^{j'})]_2 = [O(0_i^{j'+1})]_2$ respectively. 

If $1_i^j \in S$ (or similarly, $0_i^j$), and $[O(1_i^j)]_2$ is assigned to another agent, then we must have that $[A(1_i^j)]_2 = [O(1_i^{j+1})]_2$, otherwise the allocation $A$ is not individually rational. 

By an inductive argument, we must have that the agents $1_i^{k_i}$ and $0_i^{k_i}$ must both be assigned the single indivisible item $[O(x_{i+1})]_2$ for $A$ to be individually rational, a contradiction to our assumption that $A$ is in the core since every core allocation must also be individually rational.

\vspace{2mm}
\noindent \textendash~Each agent $x_i$ receives either one of $[O(1_i^1)]_2$ or $[O(0_i^1)]_2$ since an agent can only receive a single item of a given type. By our argument immediately above, if $z_i$ receives $[O(1_i^1)]_2$ in $A$, then only agents $1_i^j$ can exchange items with agents $c_j^1$ and none of the agents $0_i^j$ can exchange items with an agent $c_j^1$.

\vspace{2mm}
\noindent \textendash~Now, consider the construction of a satisfying valuation $\phi$ as follows: For every clause $c_j$, if $[A(c_j^1)]_1 = [O(1_i^j)]_1$ (similarly, $0_i^j$), set $\phi_i=1$ which must be an assignment to the variable $x_i$ that satisfies clause $c_j$. We have already shown that such an assignment exists for every clause $c_j$. Further, once $\phi_i$ is set to a value, it can never change as we have already shown that if there is some $c_j$ such that $[A(c_j^1)]_1 = [O(1_i^j)]_1$ there is no $c_{j'}$ such that $[A(c_{j'}^1)]_1 = [O(0_i^{j'})]_1$. Therefore at every point where we verified the satisfaction of a clause $c_j$, the value of the variable $x_i$ in $\phi_i$ that satisfied $c_j$ is the same as the final value of $\phi_i$. Therefore, $\phi$ satisfies $F$.
\end{proof}

This completes the proof.
\end{proof}

\section{Summary and Future Work} We propose $\mttc$ for multi-type housing markets with lexicographic preferences, and prove that it satisfies many desirable axiomatic properties. 
There are  many future directions in mechanism design for multi-type housing markets. Are there good mechanisms when agents demand more than one item of some type? Can we design strategy-proof mechanisms under other assumptions about agents' preferences, such as LP-trees~\cite{Booth10:Learning}? What is the computational complexity of manipulation under $\mttc$? What if agents' preferences are partial orders such as CP-nets only?

\section{ Acknowledgments}
Research was sponsored by the Army Research Laboratory and was accomplished under Cooperative Agreement Number W911NF-09-2-0053 (the ARL Network Science CTA). The views and conclusions contained in this document are those of the authors and should not be interpreted as representing the official policies, either expressed or implied, of the Army Research Laboratory or the U.S. Government. The U.S. Government is authorized to reproduce and distribute reprints for Government purposes notwithstanding any copyright notation here on. Lirong Xia acknowledges National Science Foundation under grant IIS-1453542 for support.


\end{document}